\documentclass[12pt]{amsart}
\usepackage{amsmath}
\usepackage{amsfonts}
\usepackage{amssymb}
\usepackage{amsthm}
\usepackage[all]{xy}
\usepackage{color}

\setlength{\textheight}{23cm}
\setlength{\textwidth}{16cm}
\setlength{\topmargin}{-0.8cm}
\setlength{\parskip}{1 em}

\hoffset=-1.4cm

\newtheorem{theorem}{Theorem}[section]
\newtheorem{lemma}[theorem]{Lemma}
\newtheorem{proposition}[theorem]{Proposition}
\newtheorem{corollary}[theorem]{Corollary}
\newtheorem{definition}[theorem]{Definition}
\newtheorem{remark}[theorem]{Remark}
\newtheorem{example}[theorem]{Example}
\newcommand{\RN}[1]{%
  \textup{\uppercase\expandafter{\romannumeral#1}}%
}

\numberwithin{equation}{section}

\baselineskip=15pt
 \author[Nupur Patanker]{Nupur Patanker}
 \address{ Indian Institute of Science Education and Research, Bhopal} 
 \email{nupurp@iiserb.ac.in}
\author[Sanjay Kumar Singh]{Sanjay Kumar Singh}
 \address{Indian Institute of Science Education and Research, Bhopal}
  \email{sanjayks@iiserb.ac.in}
\keywords{Elemenentary Abelian $p$-Extension of $\mathbb{F}_{p^{s}}(x)$, Geometric Goppa Codes, Generalized Hamming Weight}
\subjclass[2010]{94B27, 14G15, 14H05}
\title[Geometric Goppa codes over Elementary Abelian $p$-Extensions of $\mathbb{F}_{p^{s}}(x)$]{On geometric Goppa codes from Elementary Abelian $p$-Extensions of $\mathbb{F}_{p^{s}}(x)$}

\date{}
\begin{document}
\begin{abstract}
Let $p$ be a prime number and $s> 0$ an integer. In this short note, we investigate one-point geometric Goppa codes associated with an elementary abelian $p$-extension of $\mathbb{F}_{p^{s}}(x)$. We determine their dimension and the exact minimum distance in a few cases. These codes are a special case of weak Castle codes. We also list the exact values of the second generalized Hamming weight of these codes in a few cases. Simple criteria for the self-duality and the quasi-self-duality of these codes are also provided. Furthermore, we construct examples of quantum codes, convolutional codes, and locally recoverable codes on the function field.
\end{abstract}

\maketitle
\section{\textbf{Introduction}}
Let $\mathbb{F}_{p^{s}}$ be the finite field with $p^{s}$ elements of characteristic $p$ (where $s$ is a positive integer). A linear code is a $\mathbb{F}_{p^{s}}$-subspace of $\mathbb{F}_{p^{s}}^{n}$, the $n$-dimensional standard vector space over $\mathbb{F}_{p^{s}}$. Such codes are utilized for the transmission of information. \par It was observed by Goppa in \cite{Goppa} that we can use divisors in a field of algebraic functions to construct a class of linear codes. In Goppa's construction, we choose a divisor $G$ and $n$ rational places ( i.e.\ places of degree one) of the algebraic function field to form a linear code of length $n$. These codes are called geometric Goppa codes. If $G$ is of the form $rQ$, for a place $Q$ of the algebraic function field and integer $r$, then these codes are called one-point codes. One-point geometric Goppa codes over algebraic function field of Hermitian curve have been studied in \cite{book}, \cite{hermi}, \cite{hermi2}, \cite{tru}, etc. \par
Elementary abelian $p$-extension of the rational function field $\mathbb{F}_{{p}^{s}}(x)$ is a Galois extension $F$ of $\mathbb{F}_{p^s}(x)$ such that Gal$(F/\mathbb{F}_{p^s}(x))$ is an elementary abelian group of exponent $p$. The algebraic function field associated with Hermitian curve is an example of such an extension. The properties of elementary abelian $p$-extensions of $\mathbb{F}_{{p}^{s}}(x)$ have been studied in \cite{book}, \cite{elem4}, \cite{elem3}, \cite{elem}, etc. Another example of elementary abelian $p$-extension of $\mathbb{F}_{p^s}(x)$ is the function field $\mathbb{F}_{p^s }(x,y)/\mathbb{F}_{p^s}$ defined by $A(y)=B(x)$, where $A(T) \in \mathbb{F}_{{p}^{s}}[T]$ is a separable, additive polynomial of degree $q = p^t$, for some $t$, such that all its roots are contained in $\mathbb{F}_{p^s}$, and the degree of $B(T) \in \mathbb{F}_{{p}^{s}}[T]$ is not divisible by $p$. The non-singular projective curve $\mathcal{X}$ associated with this function field was studied in \cite{elem}. In \cite{elem}, T. Johnsen, S. Manshadi and N. Monzavi determined parameters of geometric Goppa codes $C_{\mathcal{L}}(D,G)$ on this function field with the assumption that $deg~D \geq 4g-2$. In this note, we study one-point geometric Goppa codes on an elementary abelian $p$-extension of $\mathbb{F}_{p^{s}}(x)$ without the above assumption. 

The special type of elementary abelian $p$-extensions of $\mathbb{F}_{p^{s}}(x)$ considered in the present note are examples of function fields associated with weak Castle curves. Castle curves and weak Castle curves are of interest for coding theory purposes. Many known codes belong to the class of Castle and weak Castle codes. Castle curves, weak Castle curves, and codes associated with them have been studied in \cite{weakcastle1}, \cite{weakcastle2}, \cite{weakcastle3}, \cite{weakcastle4}, etc. In \cite{weakcastle1}, C. Munuera, A. Sep\'{u}lveda and F. Torres have studied one-point geometric Goppa codes arising from Castle and weak Castle curves. The authors have determined bounds on the minimum distance and generalized Hamming weights of these codes. In \cite{weakcastle2}, Wilson Olaya-Le\'{o}n and C. Munuera determined order-like bound $d^{*}$ on the minimum distance of some Castle codes ( i.e.\ one-point geometric Goppa codes from Castle curves), particularly, those related to semigroups generated by two elements and telescopic semigroups. In \cite{weakcastle3}, Wilson Olaya-Le\'{o}n and Claudia Granados-Pinz\'{o}n computed the bound $d_{2}^{*}$ on the second generalized Hamming weight for some Castle codes. In \cite{weakcastle4}, C. Munuera, W. Ten\'{o}rio and F. Torres studied geometric Goppa codes producing quantum codes. The authors paid particular attention to the family of Castle and weak Castle codes. In \cite{quantum3}, F. Hernando, G. McGuire, F. Monserrat and J. J. Moyano-Fernndez have obtained new quantum codes with good parameters which are constructed from self-orthogonal geometric Goppa codes over function fields associated to a wide class of curves. Our goal in this note is to determine the exact value of the second generalized Hamming weight of one-point geometric Goppa codes defined over elementary abelian extensions of $\mathbb{F}_{p^{s}}(x)$.\par 

This note is organized as follows. In section $2$, we recall some results about Goppa's construction of linear codes and generalized Hamming weights of linear codes. In section $3$, we study the properties of elementary abelian $p$-extension $F/ \mathbb{F}_{p^s}$. In section $4$, we define  one-point geometric Goppa codes over this function field and study its parameters. In section $5$, we list the second generalized Hamming weight of these codes.  In section $6$, we determine simple conditions for the self-duality and the quasi-self-duality of these codes. In section $7$ and $8$, we obtain examples of quantum codes and convolutional codes from one-point geometric Goppa codes constructed in section $4$. In section $9$, we obtain locally recoverable codes on the function field $F/ \mathbb{F}_{p^s}$.

\section{\textbf{Preliminaries}}
\subsection{Geometric Goppa code (\cite{book}, Chapter $2$)}
Goppa's construction of linear codes over $\mathbb{F}_{p^{s}}$ is described as follows:\par
Let $F' / \mathbb{F}_{p^{s}}$ be an algebraic function field of genus $g'$. Let $P_{1},\cdots,P_{n}$ be pairwise distinct places of $F'/ \mathbb{F}_{p^{s}}$ of degree one. Let $D':=P_{1}+ \cdots +P_{n}$ and $G$ be a divisor of $F' / \mathbb{F}_{p^{s}}$ such that $supp(G) \cap supp(D')=\emptyset$. The geometric Goppa code $C_{\mathcal{L}}(D',G)$ associated with $D'$ and $G$ is defined as
$$C_{\mathcal{L}}(D',G):=\{(x(P_{1}), \cdots ,x(P_{n})):~ x \in \mathcal{L}(G)\} \subseteq \mathbb{F}_{p^{s}}^{n}.$$
Thus, $C_{\mathcal{L}}(D',G)$ is an $[n,k,d]$ code with parameters $k=dim(\mathcal{L}(G))-dim(\mathcal{L}(G-D'))$ and $d \geq n-deg(G)$.\par
Another code associated with the divisors $G$ and $D'$ is defined using local components of Weil differentials. The code $C_{\Omega}(D',G) \subseteq \mathbb{F}_{p^{s}}^{n}$ is defined as
$$C_{\Omega}(D',G):=\{(\omega_{P_{1}}(1),\cdots,\omega_{P_{n}}(1)):~\omega  \in \Omega_{F'}(G-D')\}.$$
Thus, $C_{\Omega}(D',G)$ is an $[n,k,d]$ code with parameters $k=i(G-D')-i(G)$ and $d \geq deg(G)-(2g'-2).$\par
$C_{\Omega}(D',G)$ is the dual code of $C_{\mathcal{L}}(D',G)$ with respect to Euclidean scalar product on  $\mathbb{F}_{p^{s}}^{n}$ i.e.\ $C_{\Omega}(D',G)=C_{\mathcal{L}}(D',G)^{\perp}.$ Let $\eta$ be a Weil differential of $F'$ such that $\nu_{P_{i}}(\eta)=-1$ and $\eta_{P_{i}}(1)=1$ for $i=1,\cdots,n$. Then, $C_{\mathcal{L}}(D',G)^{\perp}=C_{\Omega}(D',G)=C_{\mathcal{L}}(D',D'-G+(\eta))$.

\subsection{Generalized Hamming weights of Linear codes}
The support of a $[n,k]$ linear code $C$ over $\mathbb{F}_{{p}^{s}}$ is defined by 
$$supp(C):=\{i~:~ x_{i} \neq 0 \text{ for some } \mathbf{x}=(x_{1},\cdots,x_{n}) \in C\}.$$
For $1 \leq l \leq k$, the \textit{$l$-th generalized Hamming weight} of $C$ is defined by 
$$d_{l}(C):=min\{~ |supp(D)| ~:~ D \text{ is a linear subcode of } C \text { with } dim(D)=l\}.$$
In particular, the first generalized Hamming weight of $C$ is the usual minimum distance. The \textit{weight hierarchy} of code $C$ is the set $\{d_{1}(C), \cdots, d_{k}(C)\}$ of generalized Hamming weights. The generalized Hamming weights for linear codes were introduced in \cite{ghw1}, \cite{ghw2}, and rediscovered by Wei in \cite{ghw3}. The study of these weights was motivated by some applications in cryptography.\par
Few properties of generalized Hamming weights of $C$ have been listed in the following theorems.
\begin{theorem}{$($\cite{ghw3}, Theorem $1)$}
For an $[n,k]$ linear code $C$ with $k>0$, we have 
$$1 \leq d_{1}(C) < d_{2}(C)< \cdots <d_{k}(C) \leq n.$$
\end{theorem}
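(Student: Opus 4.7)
The plan is to prove the inequality chain in three pieces: the lower bound $d_1(C) \geq 1$, the upper bound $d_k(C) \leq n$, and the strict monotonicity $d_l(C) < d_{l+1}(C)$ for $1 \leq l < k$. The first two pieces should be essentially immediate from the definition. Any nonzero codeword has at least one nonzero coordinate, so the $1$-dimensional subcode it spans has support of size at least $1$, giving $d_1(C) \geq 1$. For the upper bound, the code $C$ itself is a $k$-dimensional subcode of $C$ whose support is contained in $\{1, \ldots, n\}$, so $d_k(C) \leq |\mathrm{supp}(C)| \leq n$.

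The content lies in the strict monotonicity. I would fix $l$ with $1 \leq l < k$ and take a subcode $D \subseteq C$ of dimension $l+1$ realizing $d_{l+1}(C) = |\mathrm{supp}(D)|$. The idea is to cut down $D$ by one dimension so as to strictly reduce its support. To that end, I would pick any coordinate $i \in \mathrm{supp}(D)$ (such an $i$ exists because $D \neq 0$) and form the subspace
$$D' := \{\,\mathbf{x} = (x_1, \ldots, x_n) \in D \,:\, x_i = 0\,\}.$$
Since $i \in \mathrm{supp}(D)$, the coordinate evaluation $\mathbf{x} \mapsto x_i$ is a nonzero linear functional on $D$, so its kernel $D'$ has dimension exactly $l$. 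By construction every codeword of $D'$ vanishes at $i$, hence $i \notin \mathrm{supp}(D')$, while clearly $\mathrm{supp}(D') \subseteq \mathrm{supp}(D)$. Therefore
$$d_l(C) \leq |\mathrm{supp}(D')| \leq |\mathrm{supp}(D)| - 1 = d_{l+1}(C) - 1 < d_{l+1}(C),$$
which is the desired strict inequality.

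No step here is a real obstacle; the only subtle point is making sure the hyperplane $D'$ actually has dimension $l$ (not $l+1$), which is why I insist on choosing $i$ from the support of $D$ so that the coordinate functional is genuinely nonzero on $D$. Combining the three pieces yields the full chain $1 \leq d_1(C) < d_2(C) < \cdots < d_k(C) \leq n$.
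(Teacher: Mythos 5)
Your proof is correct and complete: the bounds $d_1(C)\geq 1$ and $d_k(C)\leq n$ are immediate from the definitions, and your hyperplane argument (cutting an optimal $(l+1)$-dimensional subcode by the coordinate functional at a support position, which is nonzero precisely because that position lies in the support) is the standard proof of strict monotonicity, essentially Wei's original argument. The paper itself gives no proof of this statement, only the citation to Wei, so there is nothing to compare against beyond noting that your route is the canonical one.
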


Let $\mathbf{H}$ be a parity check matrix of $C$, and let $\mathbf{h}_{i}$, $1 \leq i \leq n$, be its column vectors. For $I \subseteq \{1, \cdots, n \}$, let $\langle \mathbf{h}_{i}: i \in I \rangle$ denotes the space generated by those vectors.
Then
\begin{theorem}{$($\cite{ghw3}, Theorem $2)$}
$d_{l}(C)=min\{~ |I|:~ |I|-rank(\langle \mathbf{h}_{i}: i \in I \rangle)\geq l\}.$
\end{theorem}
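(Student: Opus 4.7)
The plan is to set up a dictionary between linear subcodes $D \leq C$ and subsets $I \subseteq \{1, \ldots, n\}$ of column indices of $\mathbf{H}$, and then to compare the two optimization problems directly. For each such $I$, let $\mathbf{H}_{I}$ denote the submatrix of $\mathbf{H}$ with column set $\{\mathbf{h}_{i} : i \in I\}$, and define $C(I) := \{\mathbf{c} \in C : supp(\mathbf{c}) \subseteq I\}$. Since $\mathbf{c} \in C$ if and only if $\mathbf{H}\mathbf{c}^{T} = 0$, projection onto the coordinates indexed by $I$ identifies $C(I)$ with the kernel of $\mathbf{H}_{I}$, and rank-nullity yields
$$dim\,C(I) \;=\; |I| - rank(\mathbf{H}_{I}) \;=\; |I| - rank(\langle \mathbf{h}_{i} : i \in I \rangle).$$

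First I would establish the inequality $min\{|I| : |I| - rank(\langle \mathbf{h}_{i} : i \in I \rangle) \geq l\} \leq d_{l}(C)$. Pick a subcode $D \leq C$ realizing $d_{l}(C)$, so that $dim(D)=l$ and $|supp(D)| = d_{l}(C)$. Setting $I := supp(D)$ one has $D \leq C(I)$, and hence $l = dim(D) \leq dim\,C(I) = |I| - rank(\langle \mathbf{h}_{i} : i \in I \rangle)$. Thus this particular $I$ is admissible on the right-hand side and satisfies $|I| = d_{l}(C)$, proving the inequality.

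For the reverse direction, fix any admissible $I$, i.e.\ one with $|I| - rank(\langle \mathbf{h}_{i} : i \in I \rangle) \geq l$. The dimension formula forces $dim\,C(I) \geq l$, so I may pick an $l$-dimensional subspace $D \leq C(I) \leq C$; since $supp(D) \subseteq I$, one obtains $d_{l}(C) \leq |supp(D)| \leq |I|$. Minimizing over all admissible $I$ finishes the argument.

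The whole proof is really just a translation between two viewpoints of a subcode: intrinsically, as a subspace of $C$ with a prescribed support, and externally, as a family of linear dependencies among the columns $\mathbf{h}_{i}$. The only nontrivial ingredient is the dimension formula $dim\,C(I) = |I| - rank(\mathbf{H}_{I})$, which is immediate from rank-nullity once $C(I)$ is identified with the kernel of $\mathbf{H}_{I}$. I do not foresee any genuine obstacle, since the statement is a standard linear-algebra reformulation of the support-minimization problem defining $d_{l}(C)$.
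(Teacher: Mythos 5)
Your proof is correct. The paper does not supply a proof of this statement — it is quoted directly from Wei's paper \cite{ghw3} — and your argument is precisely the standard one: identify $C(I)=\{\mathbf{c}\in C:\ supp(\mathbf{c})\subseteq I\}$ with $\ker \mathbf{H}_I$ via the (injective, surjective) projection onto the coordinates in $I$, apply rank–nullity to get $\dim C(I)=|I|-rank(\langle \mathbf{h}_i:i\in I\rangle)$, and then run the two inequalities by taking $I=supp(D)$ for an optimal $D$ in one direction and extracting an $l$-dimensional subspace of $C(I)$ in the other. No gaps.
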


For geometric Goppa code $C_{\mathcal{L}}(D',G)$, the $l$-th generalized Hamming weight is given by the following theorem.
\begin{theorem}{$($\cite{ghwag}, Corollary $1)$}
Let $C=C_{\mathcal{L}}(D',G)$ be a code of dimension $k$ and $a:=dim(\mathcal{L}(G-D')) \geq 0$. Then for every $l$, $ 1 \leq l \leq k$,
\begin{align*}
d_{l}(C)&=min\{deg(D'')~:~0 \leq D'' \leq D',~ dim(\mathcal{L}(G-D'+D'')) \geq l+a\}\\
&=min\{n-deg(D'')~:~0 \leq D'' \leq D',~ dim(\mathcal{L}(G-D'')) \geq l+a\}.
\end{align*}
\end{theorem}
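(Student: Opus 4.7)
The plan is to prove the second equality directly via the standard correspondence between low-support subcodes of $C$ and divisors $0 \leq D'' \leq D'$ making $\mathcal{L}(G-D'')$ large, and then derive the first equality from the second by the substitution $D'' \leftrightarrow D' - D''$.

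First, for each divisor $D''$ with $0 \leq D'' \leq D'$, I would introduce the subspace
$$C(D'') := \{(c_1, \ldots, c_n) \in C : c_i = 0 \text{ whenever } P_i \in \mathrm{supp}(D'')\}.$$
Let $\mathrm{ev} \colon \mathcal{L}(G) \to \mathbb{F}_{p^s}^n$ denote the evaluation map. Since a function $x \in \mathcal{L}(G)$ vanishes at all places of $\mathrm{supp}(D'')$ precisely when $x \in \mathcal{L}(G - D'')$, we have $C(D'') = \mathrm{ev}(\mathcal{L}(G - D''))$. The kernel of $\mathrm{ev}$ restricted to $\mathcal{L}(G - D'')$ consists of those $x \in \mathcal{L}(G - D'')$ vanishing at every $P_i$, which (using $D'' \leq D'$) is precisely $\mathcal{L}(G - D')$. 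Therefore
$$\dim C(D'') = \dim \mathcal{L}(G - D'') - a.$$

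Second, I would show the inequality $d_l(C) \geq \min\{n - \deg D'' : 0 \leq D'' \leq D',\ \dim \mathcal{L}(G-D'') \geq l+a\}$. Given any $l$-dimensional subcode $V \subseteq C$ with support $S \subseteq \{1, \ldots, n\}$, set $D'' := \sum_{i \notin S} P_i$. By definition every codeword of $V$ vanishes at each $P_i$ with $i \notin S$, so $V \subseteq C(D'')$, whence $\dim \mathcal{L}(G-D'') \geq l + a$, and $|S| = n - \deg D''$.

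Third, for the reverse inequality, given a divisor $D''$ satisfying the constraints, I would use $\dim C(D'') \geq l$ to choose any $l$-dimensional subspace $V \subseteq C(D'')$. Its support is contained in $\{1, \ldots, n\} \setminus \{i : P_i \in \mathrm{supp}(D'')\}$, so $|\mathrm{supp}(V)| \leq n - \deg D''$, proving the claim. Combining both directions yields the second equality. The first equality then follows by setting $\widetilde{D} := D' - D''$, which gives a bijection on $\{0 \leq D'' \leq D'\}$ with $\deg \widetilde{D} = n - \deg D''$ and $\mathcal{L}(G - D'') = \mathcal{L}(G - D' + \widetilde{D})$.

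The argument is essentially bookkeeping; the only mildly delicate step is identifying $C(D'')$ with $\mathrm{ev}(\mathcal{L}(G-D''))$ and computing its dimension, which relies on the inclusion $\mathcal{L}(G - D') \subseteq \mathcal{L}(G - D'')$ coming from $D'' \leq D'$. Once that is in hand, the two-sided comparison with $d_l(C)$ is immediate from the definition of generalized Hamming weight.
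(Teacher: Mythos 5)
Your proof is correct. The paper itself gives no proof of this statement---it is quoted verbatim from Munuera (\cite{ghwag}, Corollary 1)---so there is no internal argument to compare against, but your argument is the standard one: the identification $C(D'')=\mathrm{ev}(\mathcal{L}(G-D''))$ with $\dim C(D'')=\dim\mathcal{L}(G-D'')-a$ (valid because $\mathrm{supp}(G)\cap\mathrm{supp}(D')=\emptyset$ and the kernel of the evaluation map on $\mathcal{L}(G-D'')$ is exactly $\mathcal{L}(G-D')$), the two-sided comparison with the definition of $d_l$, and the change of variable $D''\mapsto D'-D''$ to pass between the two formulas. All steps, including the nonemptiness of the minimizing set (take $D''=0$), check out.
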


\subsubsection{Feng-Rao distances on numerical semigroups}~\\
The Feng-Rao distances on numerical semigroups are defined in \cite{semi}. We will explain it briefly in this subsection.\par
Let $A$ be a numerical semigroup. If for a set $G \subseteq A$, every $x \in A$ can be written as a linear combination $$x=\sum_{y \in G} \lambda_{y} y,$$
where finitely many $\lambda_{y} \in \mathbb{N} \cup \{0\}$ is non-zero, then we say that $A$ is generated by $G$. It is well known that every numerical semigroup is finitely generated. An element $x \in A$ is said to be irreducible if $x=a+b$ for $a,b \in A$ implies $a.b=0$. Every generator set contains the set of irreducible elements and the set of irreducibles generates $A$. The number of irreducible elements is called the \textit{embedding dimension} of $A$. We enumerate the elements of $A$ in increasing order $$A=\{\rho_{1}=0 < \rho_{2}< \cdots\}.$$
For $a,b \in \mathbb{Z}$ given, we say that $a$ \textit{divides} $b$, and write
$$a \leq_{A} b, \text{ if } b-a \in A.$$
The binary relation is an order relation.\par
The set $D(y)$ denotes the set of \textit{divisors} of $y$ in $A$, and for given $M=\{m_{1}, \cdots, m_{t} \} \subseteq A$, we write $D(M)=D(m_{1}, \cdots, m_{t})=\bigcup^{t}_{i=1} D(m_{i})$.

\begin{definition}
Let $A$ be a numerical semigroup, that is, a submonoid of $\mathbb{N}$ such that $|\mathbb{N} \setminus A | < \infty$ and $0 \in A$. We call $g:=|\mathbb{N}\setminus A|$ the \textit{genus} of $A$. The unique element $c \in A$ such that $c-1 \not \in A$ and $c+u \in A$ for all $u \in \mathbb{N}$  is called the \textit{conductor} of $A$. The (classical) Feng-Rao distance of $A$ is defined by the function
\vspace{-0.4 cm}
\begin{center}
$
\begin{array}{llll}
\delta_{FR}:&A &\rightarrow &\mathbb{N}\\
 &x &\mapsto &\delta_{FR}(x):=min \{~  |D(m_{1})| ~: ~m_{1} \geq x, ~m_{1} \in A\}.\\
\end{array}
$
\end{center}
\end{definition}
There are some well-known results about the function $\delta_{FR}$ for an arbitrary numerical semigroup $A$. One of the important results is the following
$$\delta_{FR}(x) \geq x+1-2g \text{ for all } x \in A \text{ with } x \geq c.$$
The following result gives a bound on the generalized Hamming weights of certain codes in terms of the function $\delta_{FR}$. 
\begin{theorem}{$($\cite{semi}, Theorem $46)$}
Let $A=\{0=\rho_{1}<\rho_{2}< \cdots < \rho_{n}< \cdots \}$ be an embedding dimension two numerical semigroup. Then $$d_{l}(C_{t}) \geq \delta_{FR}(t+1)+\rho_{l}$$
for $l=1, \cdots, k_{t}$, where $C_{t}$ is a code in an array of codes as in \cite{array} and $k_{t}$ is the dimension of $C_{t}$.
\end{theorem}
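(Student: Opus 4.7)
The plan is to combine the parity-check characterization of $d_l(C_t)$ from Theorem~2 with the explicit structure of embedding-dimension-two numerical semigroups. First I would unpack how the array of codes in \cite{array} is built from $A$: the columns of a parity check matrix of $C_t$ are indexed by the evaluation data associated with the semigroup, and rank deficiency of a column subset translates, via the semigroup order $\leq_A$, into the existence of elements $m_1, m_2, \dots, m_l \in A$ with $m_1 \geq t+1$ whose divisor sets $D(m_1), \dots, D(m_l)$ witness the deficiency. Theorem~2 then gives
\[
d_l(C_t) \;\geq\; \min\bigl\{\, |D(m_1) \cup D(m_2) \cup \cdots \cup D(m_l)| \;:\; m_1 \geq t+1,\; m_i \in A \bigr\},
\]
where the $m_i$ are chosen pairwise incomparable under $\leq_A$ (or otherwise arranged so the union contributes independent rows).

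Second, the combinatorial heart of the argument is the lower bound
\[
|D(m_1) \cup \cdots \cup D(m_l)| \;\geq\; \delta_{FR}(t+1) + \rho_l .
\]
The base case $l=1$ is the definition of $\delta_{FR}(t+1)$. For the inductive step I would exploit the embedding-dimension-two hypothesis $A = \langle a, b\rangle$, where every element admits a unique normal form $x = \alpha a + \beta b$ with $0 \leq \beta < a$. This uniqueness makes the divisor sets $D(x)$ combinatorially transparent: for any fixed $m \in A$ and any $\rho \in A$, the shifted set $\rho + D(m)$ lies inside $D(m+\rho)$ and is disjoint from $D(m)$ whenever $\rho \not\leq_A 0$. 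Using these shifts with $\rho = \rho_2, \rho_3, \dots, \rho_l$ produces $\rho_l$ genuinely new elements on top of a set of size $\delta_{FR}(t+1)$ guaranteed by the $l=1$ bound, which is exactly the claim.

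Third, putting the two ingredients together gives $d_l(C_t) \geq \delta_{FR}(t+1) + \rho_l$ after taking the minimum over admissible tuples $(m_1,\dots,m_l)$. The main obstacle I anticipate is the disjointness/independence step in the combinatorial lemma: it is easy to write down shifted divisors, but one must verify that they remain pairwise distinct and do not collapse into $D(m_1)$, and that they can be selected so the corresponding parity-check rows actually contribute to the rank in Theorem~2. Both of these hinge on the uniqueness of the $(\alpha,\beta)$-representation that is special to embedding dimension two, which is precisely why the clean bound $\delta_{FR}(t+1) + \rho_l$ fails to persist in higher embedding dimension.
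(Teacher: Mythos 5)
The paper offers no proof of this statement: it is quoted as Theorem~46 of the cited reference of Delgado, Farr\'an, Garc\'ia-S\'anchez and Llena, whose actual argument combines the generalized order (Feng--Rao) bound $d_l(C_t)\geq\delta_{FR}^{\,l}(t+1)$, where $\delta_{FR}^{\,l}(x):=\min\{|D(m_1,\dots,m_l)| : m_i\in A,\ x\leq m_1<\dots<m_l\}$, with a long combinatorial analysis showing $\delta_{FR}^{\,l}(x)\geq\delta_{FR}(x)+\rho_l$ for two-generated semigroups. Your two-step architecture matches that outline, but the combinatorial heart of your plan has a genuine gap.

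Concretely: (i) your key lemma is false as stated. For $\rho\in A$ the inclusion $\rho+D(m)\subseteq D(m+\rho)$ is correct, but $\rho+D(m)$ need not be disjoint from $D(m)$: in $A=\langle 2,3\rangle$ with $m=6$, $\rho=2$ one has $D(6)=\{0,2,3,4,6\}$ and $2+D(6)=\{2,4,5,6,8\}$, which meet in $\{2,4,6\}$. (ii) Even granting some disjointness, you must produce $\rho_l$ new elements while performing only $l-1$ shifts; since $\rho_l$ is in general much larger than $l-1$, ``one new element per shift'' cannot yield the claimed count, and no mechanism for the stronger count is given. (iii) Most seriously, the minimum defining $\delta_{FR}^{\,l}(t+1)$ ranges over \emph{arbitrary} distinct $m_1<\dots<m_l$ in $A$ with $m_1\geq t+1$ (distinct, not ``pairwise incomparable'' as you write), and the differences $m_i-m_1$ need not lie in $A$ at all (take $A=\langle 3,5\rangle$, $m_1=8$, $m_2=9$); hence the shifted copies $(m_i-m_1)+D(m_1)$ are not even defined inside $A$, and your argument only treats the special tuples $m_i=m_1+\rho_i$. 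A lower bound on a minimum must handle every admissible tuple; this is exactly where the published proof works hard, using the normal form $x=\alpha a+\beta b$ with $0\leq\beta<a$ to bound $|D(m_i)\setminus D(m_1)|$ case by case. Your first step (passing from the parity-check criterion to the minimization over divisor sets) is also only sketched, but it is a known result that could legitimately be cited.
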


\section{Elementary Abelian $p$-extensions $F /\mathbb{F}_{p^{s}}$}
Let $K:=\mathbb{F}_{p^{s}}$ and $q$ be a power of $p$. Assume all the roots of the equation $T^{q}+\mu T=0$ are in $K$ (choose $s$ large enough so that all the roots are in $K$). Denote by $\beta_{1},\cdots,\beta_{q}$ the roots of $T^{q}+\mu T$ in $K$. Let $m$ be a positive integer coprime to $p$ such that $m <p^s$. Choose $m$ distinct elements $\alpha_{1},\cdots,\alpha_{m} \in K$. Let $f(x):=\prod_{i=1}^{m}(x-\alpha_{i})$.\par   Consider the function field $F /K$ defined by the equation 
\begin{equation}
y^{q} + \mu y=f(x)\in K[x],
\end{equation}
where $0 \neq \mu \in K$. \\

\begin{remark}
We have $min\{q,m\} \geq 2$, else $m=1$ and then $F/K$ is rational function field.
\end{remark}

The function field defined in equation $(3.1)$ is an elementary abelian $p$-extension of $K(x)$, as defined in \cite{book}. Some of the properties of $F/ K$ can be seen in the following lemma.

\begin{lemma}{$($\cite{book}, p.$232)$}
\vspace{-0.4 cm}
\begin{enumerate}
\item The genus of $F/K$ is $g=(q-1)(m-1)/2$.
\item The pole $P_{\infty}\in \mathbb{P}_{K(x)}$ of $x$ in $K(x)$ has a unique extension $Q_{\infty} \in \mathbb{P}_{F}$ of degree one, and $e(Q_{\infty}|P_{\infty})=q$.
\item The divisor of the differential $dx$ is 
$$(dx)=(2g-2)Q_{\infty}=((q-1)(m-1)-2)Q_{\infty}.$$
\item The pole divisor of $x$ is $(x)_{\infty}=qQ_{\infty}$ and the pole divisor of $y$ is $(y)_{\infty}=mQ_{\infty}$.
\item Let $r \geq 0$. Then, the elements $x^{i}y^{j}$ with 
$$0 \leq i, ~0\leq j \leq q-1, ~qi+mj \leq r$$
form a basis of the space $\mathcal{L}(r Q_{\infty})$ over $K$.
\item For $1 \leq i \leq m$, $1 \leq j \leq q$, $P_{\alpha_{i},\beta_{j}}$ are the places of $F/K$ of degree one.\\
\end{enumerate}
\end{lemma}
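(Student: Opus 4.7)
The plan is to verify the six claims in order, leaning on the theory of generalized Artin-Schreier extensions from Stichtenoth's book (\cite{book}, Chapter~3), since $T^q + \mu T$ is a separable additive polynomial all of whose roots lie in $K$.

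First I would analyze the behaviour at infinity. Because $v_{P_\infty}(f(x)) = -m$ with $\gcd(m,p) = 1$, the polynomial $T^q + \mu T - f(x)$ is irreducible over $K(x)$, so $[F:K(x)] = q$. Applied to $y^q + \mu y = f(x)$, a short valuation argument forces any place $Q$ of $F$ above $P_\infty$ to satisfy $q \cdot v_Q(y) = e(Q|P_\infty) \cdot (-m)$; since $\gcd(m,q) = 1$ and $e \leq q$, this yields total ramification with $e = q$ and a unique extension $Q_\infty$, automatically of degree one by the fundamental equality. This proves part~(2). The same computation shows $v_{Q_\infty}(y) = -m$, and since the only possible pole of $y$ lies above $P_\infty$ we obtain $(y)_\infty = m Q_\infty$ and $(x)_\infty = q Q_\infty$, which is part~(4).

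Next I would tackle the different and the genus. The only ramification in $F/K(x)$ occurs at $Q_\infty$, and for generalized Artin-Schreier covers of this type the different exponent is
\begin{equation*}
d(Q_\infty \mid P_\infty) = (q-1)(m+1),
\end{equation*}
obtained from the standard filtration of higher ramification groups (the Galois group being the elementary abelian $p$-group of translations $y \mapsto y + \beta_j$). Riemann-Hurwitz then gives $2g - 2 = -2q + (q-1)(m+1) = (q-1)(m-1) - 2$, which is part~(1). For part~(3), pulling back the canonical divisor $(dx)_{K(x)} = -2 P_\infty$ of the rational subfield and adding the different produces $(dx) = ((q-1)(m-1) - 2) \, Q_\infty$. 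The different exponent computation is the one genuinely technical step; once it is accepted, everything else is mechanical.

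For part~(5), the pole orders $v_{Q_\infty}(x^i y^j) = -(qi + mj)$ are pairwise distinct for $0 \leq j \leq q-1$ (using $\gcd(q,m) = 1$), so the listed monomials are $K$-linearly independent and belong to $\mathcal{L}(rQ_\infty)$ precisely when $qi + mj \leq r$. A count of such pairs $(i,j)$ matches $\dim \mathcal{L}(rQ_\infty) = r - g + 1$ for $r \geq 2g - 1$ by Riemann-Roch, and for smaller $r$ one checks directly that the complementary positive integers are Weierstrass gaps at $Q_\infty$. Finally, for part~(6), at each place $P_{\alpha_i}$ of $K(x)$ one has $f(\alpha_i) = 0$, so the defining polynomial reduces to $T^q + \mu T = \prod_{j=1}^q (T - \beta_j)$ modulo $P_{\alpha_i}$, a product of $q$ distinct linear factors in $K[T]$; Kummer's theorem (\cite{book}, Thm.~3.3.7) then yields a complete splitting of $P_{\alpha_i}$ into $q$ rational places $P_{\alpha_i, \beta_j}$, one for each root $\beta_j$ of $T^q + \mu T$.
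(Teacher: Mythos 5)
Your proof is correct: the valuation argument at $P_\infty$ giving total ramification, the different exponent $(q-1)(m+1)$ fed into Riemann--Hurwitz, the pole-order independence of the monomials $x^i y^j$, and Kummer's theorem at the zeros of $f$ are exactly the standard route. The paper itself offers no proof (the lemma is quoted from Stichtenoth, p.~232), and your argument faithfully reconstructs the treatment in that cited source.
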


Let $Q$ be a place of a function field $F'/K'$ of degree one. An integer $l\geq 0$ is called pole number at $Q$ is there exists $z \in F'$ such that $(z)_{\infty}=lQ$. Let $p_{1}<p_{2}<\cdots$ be the sequence of pole numbers at $Q$ (that is, $p_{a}$ is the $a$-th pole number at $Q$); thus $dim (\mathcal{L}(p_{a}Q))=a$, so $p_{1}=0$. The Weierstrass semigroup $H$ at $Q$ is the set of pole numbers at $Q$. \par We have the following result regarding the  Weierstrass semigroup $H$ at $Q_{\infty}$.

\begin{lemma}{$($\cite{book}, \cite{order}$)$}
The Weierstrass semigroup $H$ at $Q_{\infty}$ is generated by $m$ and $q$ i.e.\ $H=\langle q,m \rangle$.
The largest gap number at $Q_{\infty}$ is $(2g-1)$ and $H$ is a symmetric numerical semigroup.
\end{lemma}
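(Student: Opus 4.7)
The plan is to extract $H$ concretely from the basis formula in Lemma~3.2(5), show it agrees with $\langle q,m\rangle$, and then deduce the remaining two claims from classical facts about the numerical semigroup generated by two coprime integers.

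First I would observe that by Lemma~3.2(4), both $q=\mathrm{deg}(x)_\infty$ and $m=\mathrm{deg}(y)_\infty$ are pole numbers at $Q_\infty$, so $\langle q,m\rangle \subseteq H$. For the reverse inclusion, I would use the basis in Lemma~3.2(5): an integer $r \geq 0$ is a pole number exactly when $\dim\mathcal{L}(rQ_\infty) > \dim\mathcal{L}((r-1)Q_\infty)$, and by the basis description this happens iff there exist $i\geq 0$ and $0\leq j\leq q-1$ with $qi+mj = r$. Hence
\[
H \;=\; \{\,qi+mj : i\geq 0,\ 0\leq j\leq q-1\,\}.
\]

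Next I would show this set equals $\langle q,m\rangle$. Since $q$ is a power of $p$ and $\gcd(m,p)=1$, we have $\gcd(q,m)=1$, so the residues $0, m, 2m, \dots, (q-1)m$ are pairwise distinct modulo $q$ and exhaust $\mathbb{Z}/q\mathbb{Z}$. Therefore every $n\in\langle q,m\rangle$, written as $n=qi'+mj'$ with $i',j'\geq 0$, can be rewritten uniquely as $n = qi+mj$ with $0\leq j\leq q-1$ and $i\in\mathbb{Z}$; using $n\geq mj$ (which holds because $n$ is already expressible with nonnegative coefficients and the representative with $0\leq j \leq q-1$ is the smallest such $j$ in its residue class), we get $i\geq 0$. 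This proves $H=\langle q,m\rangle$.

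For the last two assertions I would invoke the classical Sylvester--Frobenius theorem: for coprime positive integers $q,m$, the largest integer \emph{not} in $\langle q,m\rangle$ is $qm-q-m$. Substituting gives
\[
qm-q-m \;=\; (q-1)(m-1)-1 \;=\; 2g-1,
\]
which identifies the largest gap. Finally, a numerical semigroup is symmetric iff its Frobenius number equals $2g-1$ (equivalently, iff $|\mathbb{N}\setminus H|$ equals half of Frobenius $+1$); since we have just verified this equality, $H$ is symmetric. Alternatively, one may quote the classical fact that $\langle q,m\rangle$ with $\gcd(q,m)=1$ is always symmetric. The only step that requires care is the bijection $n \leftrightarrow (i,j)$ with $0\leq j\leq q-1$; once this is in hand, everything else is routine.
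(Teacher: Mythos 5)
Your proof is correct. The paper itself offers no proof of this lemma --- it is quoted from Stichtenoth's book and from Geil--Munuera--Ruano--Torres --- so there is no internal argument to compare against; your derivation is the standard one and fills in exactly what those references contain. The key points all check out: reading $H=\{qi+mj : i\geq 0,\ 0\leq j\leq q-1\}$ off the basis in Lemma~3.2(5) is legitimate because $\gcd(q,m)=1$ makes these values pairwise distinct, so the dimension of $\mathcal{L}(rQ_{\infty})$ jumps exactly at such $r$; the normalization $0\leq j\leq q-1$ recovers all of $\langle q,m\rangle$ as you argue; and Sylvester--Frobenius gives the largest gap $qm-q-m=(q-1)(m-1)-1=2g-1$, with symmetry following either from the characterization ``Frobenius number $=2g-1$'' (where one implicitly uses that the number of gaps at $Q_{\infty}$ equals the genus $(q-1)(m-1)/2$, consistent with Sylvester's count) or from the classical fact that every numerical semigroup with two coprime generators is symmetric. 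No gaps.
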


Let $\mathcal{X'}$ be the non-singular projective curve associated with $F/K$. Then $K(\mathcal{X'}) \cong F$. We next show that $\mathcal{X'}$ is a weak Castle curve.\\~\\
A pointed curve $(\mathcal{X}, Q)$ over $\mathbb{F}_{p^s}$ is a curve $\mathcal{X}$ defined over $\mathbb{F}_{p^s}$ together with a rational point $Q \in \mathcal{X}(\mathbb{F}_{p^s})$.
\begin{definition}{$($\cite{weakcastle1}, Definition $2.1)$}
A pointed curve $(\mathcal{X}, Q)$ over $\mathbb{F}_{p^s}$ is called weak castle if 
\vspace{-0.5 cm}
\begin{itemize}
\item the Weierstrass semigroup $H$ at $Q$ is symmetric;
\item there exist a morphism $\phi : \mathcal{X} \rightarrow \mathbb{P}^{1}$ with $div_{\infty}(\phi)=hQ$, and elements $\gamma_{1},\cdots, \gamma_{b} \in \mathbb{F}_{p^s}$ such that for all $i=1, \cdots,b$,  we have $\phi^{-1}(\gamma_{i}) \subseteq \mathcal{X}(\mathbb{F}_{p^s})$ and $\# \phi^{-1}(\gamma_{i})=h$.\\
\end{itemize} 
\end{definition}

\begin{lemma}
$(\mathcal{X'},Q_{\infty})$ is a weak Castle curve.
\end{lemma}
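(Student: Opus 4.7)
The plan is to verify the two bullet points of Definition 3.4 directly from the structural data about $F/K$ already collected in Lemmas 3.2 and 3.3. The first bullet point is immediate: Lemma 3.3 states that the Weierstrass semigroup $H = \langle q, m \rangle$ at $Q_\infty$ is symmetric, so nothing more is required there.

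For the second bullet point I will take $\phi : \mathcal{X'} \to \mathbb{P}^{1}$ to be the morphism induced by the function $x \in F$, i.e.\ the morphism corresponding to the inclusion $K(x) \hookrightarrow F$. By Lemma 3.2(4), the pole divisor of $x$ equals $qQ_\infty$, so $\mathrm{div}_\infty(\phi) = qQ_\infty$ and the integer $h$ in Definition 3.4 is $h = q$. It then remains to exhibit elements $\gamma_1, \ldots, \gamma_b \in K$ such that $\phi^{-1}(\gamma_i) \subseteq \mathcal{X'}(K)$ and $\#\phi^{-1}(\gamma_i) = q$ for every $i$.

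The natural candidates are $\gamma_i := \alpha_i$ for $i = 1, \ldots, m$, giving $b = m$. For each such $\gamma_i$ we have $f(\alpha_i) = 0$ by the definition of $f$, and the defining equation specialises to $y^{q} + \mu y = 0$, whose set of roots is precisely $\{\beta_1, \ldots, \beta_q\} \subseteq K$. Because $T^{q} + \mu T$ has derivative $\mu \neq 0$, it is separable, so these $q$ roots are pairwise distinct. Lemma 3.2(6) then gives $q$ distinct degree-one places $P_{\alpha_i, \beta_1}, \ldots, P_{\alpha_i, \beta_q}$ lying above $\alpha_i$, and since the ramification at $\alpha_i$ must be trivial (the fibre already contains $q = [F : K(x)]$ distinct places), no other place lies above $\alpha_i$. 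Hence $\phi^{-1}(\alpha_i) = \{P_{\alpha_i, \beta_1}, \ldots, P_{\alpha_i, \beta_q}\} \subseteq \mathcal{X'}(K)$ with $\#\phi^{-1}(\alpha_i) = q = h$, as required.

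There is no substantial obstacle in this argument: both conditions of Definition 3.4 are essentially contained in Lemmas 3.2 and 3.3, and the proof is a matter of naming $\phi$, $h$, and the $\gamma_i$ correctly. The only small items worth spelling out are the separability of $T^{q} + \mu T$ (which makes the $\beta_j$ distinct and hence ensures the fibre has exactly $q$ elements) and the observation that $m \geq 1$, so at least one such $\gamma_i$ exists.
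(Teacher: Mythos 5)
Your proof is correct and follows essentially the same route as the paper: symmetry of $H$ from Lemma 3.3, $\phi = x$ with $\mathrm{div}_\infty(\phi)=qQ_\infty$, and the fibres over $\alpha_1,\ldots,\alpha_m$ consisting of the $q$ rational places $P_{\alpha_i,\beta_j}$. The extra details you supply (separability of $T^q+\mu T$ and the degree count showing the fibre is complete) are exactly what the paper delegates to the cited facts from Stichtenoth.
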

\begin{proof}
By Lemma $3.3$, $H$ at $Q_{\infty}$ is symmetric. Take $\phi:=x$ i.e.
\begin{center}
$\phi:\mathcal{X'} \rightarrow \mathbb{P}^{1},~P \mapsto [x(P):1].$\\
\end{center}
Then, $(x)_{\infty}=qQ_{\infty}$. For $1 \leq i \leq m$, we have $$\phi^{-1}(\alpha_{i})=\{(\alpha_{i},\beta_{1}),\cdots,(\alpha_{i},\beta_{q})\} \subseteq \mathcal{X'}(K).$$ 
(It follows from \cite{book2}, Theorem $3.1.15$ that $K$-rational points of $\mathcal{X'}$ corresponds to places of $F$ of degree one. Denote by $P_{\alpha_{i}}$ the zero of $(x-\alpha_{i})$ in $K(x)$ i.e place of degree one corresponding to $K$-rational points $\alpha_{i}$ of $\mathbb{P}^{1}$. Then, by \cite{book}( p. $232$), there are exactly $q$ places $P_{\alpha_{i},\beta_{j}}$, $1 \leq j \leq q$, of degree one lying over $P_{\alpha_{i}}$, for each $i$, $1 \leq i \leq m$. Thus, $P_{\alpha_{i},\beta_{j}}$ corresponds to $(\alpha_{i},\beta_{j})$.)\\
\end{proof}

\vspace{-1 cm}
\section{\textbf{Geometric Goppa code over $F /K $}}
In \cite{book}, Stichtenoth has investigated one-point geometric Goppa codes over the Hermitian function field. In \cite{weakcastle1}, the authors have defined one-point geometric Goppa codes over weak Castle curves (i.e.\ weak Castle codes) and studied their parameters. In this section, we define one-point geometric Goppa codes over $F/K$ as defined in \cite{weakcastle1} and determine its parameters using the ideas of \cite{book}, \cite{tru} and \cite{weakcastle1}.\par

\begin{definition}
For r $\in \mathbb{Z}$, we define\\
$$C_{r}:=C_{\mathcal{L}}(D,rQ_{\infty}),$$
where $$D:=\sum_{i=1}^{m} \sum_{j=1}^{q} P_{\alpha_{i},\beta_{j}}.$$
\end{definition}
Then, $C_{r}$ is a code of length $N:=qm$ over the field $K$. For $r <0$, $\mathcal{L}(rQ_{\infty})=\{0\}$, therefore $C_{r}=\{(0, \cdots, 0)\}$. For $r > N+(2g-2)=2qm-q-m-1$, $dim (C_{r})=N$, therefore $C_{r}=K^{N}$. So, it remains to study codes $C_{r}$ with $ 0\leq r \leq 2qm-q-m-1$. 

\subsection{Dual code of $C_{r}$}~\\~\\
In \cite{weakcastle1}, Proposition $3.1$, the authors have determined dual of weak Castle codes. In this subsection, we write a detailed proof of duality of $C_{r}$.\\

Let $f(x)$ and $D$ as in section $3$ and $4$. Let $\eta:=\frac{f'(x)}{f(x)}dx$ be a differential, then we get $\nu_{P}(\eta)=-1$ and $res_{P}(\eta)=1$ for all places $P \in supp(D)$. Therefore, we have $C_{\mathcal{L}}(D,r Q_{\infty})^{\perp}=C_{\mathcal{L}}(D,D+(\eta)-r Q_{\infty})$.\\

\begin{proposition}
The dual code of $C_{r}$ is given by $$C_{r}^{\perp}=\bar{a} \star C_{2qm-q-m-1-r},$$ where $(\bar{a})^{-1}=((f'(x))(P_{\alpha_{1},\beta_{1}}),\cdots,(f'(x))(P_{\alpha_{m},\beta_{q}})) \in (K^{*})^{N}$ and $\star$ refers to coordinate-wise product in $K^N$.
\end{proposition}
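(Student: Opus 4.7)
The strategy is to apply the general duality formula $C_{\mathcal{L}}(D,G)^{\perp}=C_{\mathcal{L}}(D,D-G+(\eta))$ recalled in Section~2.1, with the specific differential $\eta=\frac{f'(x)}{f(x)}dx$. The first task is to verify that this $\eta$ meets the required local conditions at every $P_{\alpha_i,\beta_j}\in\mathrm{supp}(D)$. Since the $\alpha_i$ are simple roots of $f$, the partial fraction expansion $f'(x)/f(x)=\sum_{k=1}^{m}\frac{1}{x-\alpha_k}$ shows that at $P_{\alpha_i,\beta_j}$ only the $k=i$ term contributes a pole. Moreover $x-\alpha_i$ is a local parameter at $P_{\alpha_i,\beta_j}$ (the places over $P_{\alpha_i}$ are unramified by Lemma~3.2(6)), so $\nu_{P_{\alpha_i,\beta_j}}(dx)=0$ and $\nu_{P_{\alpha_i,\beta_j}}(\eta)=-1$ with residue $1$, exactly as needed.

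Next I would compute the divisor $D+(\eta)-rQ_\infty$ explicitly. From $(x-\alpha_i)=\sum_j P_{\alpha_i,\beta_j}-qQ_\infty$ one obtains $(f(x))=D-qmQ_\infty$, while Lemma~3.2(3) gives $(dx)=(2g-2)Q_\infty$. Therefore
\begin{equation*}
D+(\eta)-rQ_\infty \;=\; D+(f'(x))-(f(x))+(dx)-rQ_\infty \;=\; (f'(x))+\bigl(2qm-q-m-1-r\bigr)Q_\infty,
\end{equation*}
using $2g-2=qm-q-m-1$. Writing $s:=2qm-q-m-1-r$, this identifies the dual with $C_{\mathcal{L}}\bigl(D,\,(f'(x))+sQ_\infty\bigr)$.

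The key algebraic step is then the bijection $z\mapsto z\,f'(x)$ between $\mathcal{L}\bigl((f'(x))+sQ_\infty\bigr)$ and $\mathcal{L}(sQ_\infty)$, which is immediate from the definition of the Riemann--Roch space. Because the $\alpha_i$ are simple roots, $f'(\alpha_i)\neq 0$ for every $i$, so $f'(x)$ is regular and non-vanishing at each $P_{\alpha_i,\beta_j}$; hence one may evaluate on both sides and obtain
\begin{equation*}
z(P_{\alpha_i,\beta_j}) \;=\; \frac{(z\,f'(x))(P_{\alpha_i,\beta_j})}{f'(x)(P_{\alpha_i,\beta_j})}.
\end{equation*}
Letting $u:=z\,f'(x)$ range over $\mathcal{L}(sQ_\infty)$ and using the definition of $\bar a$, this says exactly that the evaluation vector of $z$ equals $\bar a\star (u(P_{\alpha_i,\beta_j}))_{i,j}$, proving $C_r^{\perp}=\bar a\star C_s$.

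The computation is essentially bookkeeping once one picks the right $\eta$; the only genuine point requiring care is checking that $f'(x)$ is regular and nonzero at every place of $\mathrm{supp}(D)$, which is what allows the division by $f'(x)(P_{\alpha_i,\beta_j})$ and thus produces the twist $\bar a$ in coordinate-wise form rather than a more complicated relation.
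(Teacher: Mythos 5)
Your proposal is correct and follows essentially the same route as the paper: the paper also takes $\eta=\frac{f'(x)}{f(x)}dx$, notes $\nu_P(\eta)=-1$ and $\mathrm{res}_P(\eta)=1$ on $\mathrm{supp}(D)$, and reduces $D+(\eta)-rQ_\infty$ to $(f'(x))+(2qm-q-m-1-r)Q_\infty$ before extracting the twist by $\bar a$. Your write-up merely supplies more detail (the partial-fraction check of the local conditions and the explicit bijection $z\mapsto zf'(x)$) than the paper's condensed chain of equalities.
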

\begin{proof}
\begin{align*}
C_{r}^{\perp}&=C_{\mathcal{L}}(D, D+ (\eta)-rQ_{\infty})\\
             &=C_{\mathcal{L}}(D, D+(f'(x))-(f(x))+(dx)-rQ_{\infty})\\
             &=C_{\mathcal{L}}(D, D+(f'(x))-D+qm Q_{\infty}+(2g-2)Q_{\infty}-r Q_{\infty})\\
             &=C_{\mathcal{L}}(D, (f'(x))+(qm+(2g-2-r) Q_{\infty})\\ 
             &=\bar{a}\star C_{\mathcal{L}}(D,(2qm-q-m-1-r) Q_{\infty})\\
             &=\bar{a}\star C_{2qm-q-m-1-r}.
\end{align*}
\end{proof}

\subsection{Parameters of $C_{r}$}~\\~\\
In \cite{weakcastle1}, Proposition $3.4$, $3.6$ and $3.8$, the authors have determined dimension and minimum distance of $C_{r}$ for certain values of $r$. In this subsection, we include the known results and also determine the parameters of $C_{r}$ for other values of $r$.\\

We have the following result on the parameters of geometric Goppa code $C_{\mathcal{L}}(D',G)$.
\begin{theorem}{$($\cite{book}, $2.2.2$ and $2.2.3)$}
$C_{\mathcal{L}}(D',G)$ is an $[n,k,d]$ code with parameters $$k=dim(\mathcal{L}(G))-dim(\mathcal{L}(G-D')) \text{ and } d \geq n-deg(G).$$ If $deg(G) < n$, then $k=dim(\mathcal{L}(G))$.\\
\end{theorem}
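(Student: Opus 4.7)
The plan is to analyze the $K$-linear evaluation map
\[
\mathrm{ev}: \mathcal{L}(G) \longrightarrow K^{n}, \qquad x \longmapsto (x(P_{1}), \ldots, x(P_{n})).
\]
First I would check that $\mathrm{ev}$ is well-defined. Since $\mathrm{supp}(G) \cap \mathrm{supp}(D') = \emptyset$, each $P_{i}$ is a pole neither of $G$'s defining function nor of any $x \in \mathcal{L}(G)$ at that place, so $\nu_{P_{i}}(x) \geq 0$ and the residue class $x(P_{i}) \in K$ makes sense. The image of $\mathrm{ev}$ is $C_{\mathcal{L}}(D',G)$ by definition, so $\dim C_{\mathcal{L}}(D',G) = \dim \mathcal{L}(G) - \dim \ker(\mathrm{ev})$ by rank-nullity.

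Next I would identify $\ker(\mathrm{ev})$. An element $x \in \mathcal{L}(G)$ lies in the kernel exactly when $x(P_{i}) = 0$ for every $i$, i.e.\ $\nu_{P_{i}}(x) \geq 1$ for all $i$. Combined with the condition $(x) \geq -G$, this is equivalent to $(x) \geq -G + D'$, so $\ker(\mathrm{ev}) = \mathcal{L}(G - D')$. Hence $k = \dim \mathcal{L}(G) - \dim \mathcal{L}(G-D')$. For the final assertion, if $\deg(G) < n = \deg(D')$ then $\deg(G-D') < 0$, and since a nonzero element of a Riemann-Roch space forces the associated divisor to have nonnegative degree, $\mathcal{L}(G-D') = \{0\}$ and $k = \dim \mathcal{L}(G)$.

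For the minimum distance bound, take any nonzero codeword $c = \mathrm{ev}(x)$ with $x \in \mathcal{L}(G)$, and let $w = \mathrm{wt}(c)$. Set $T = \{i : x(P_{i}) = 0\}$ so that $|T| = n - w$, and define $D'' := \sum_{i \in T} P_{i} \leq D'$. The vanishing conditions $\nu_{P_{i}}(x) \geq 1$ for $i \in T$ combined with $x \in \mathcal{L}(G)$ give $x \in \mathcal{L}(G - D'')$ with $x \neq 0$. The same nonvanishing principle used above then forces $\deg(G - D'') \geq 0$, i.e.\ $\deg(G) \geq n - w$, so $w \geq n - \deg(G)$. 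Taking the minimum over all nonzero codewords yields $d \geq n - \deg(G)$.

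The only subtle point in this plan is the repeatedly used fact that $\mathcal{L}(E) \neq \{0\}$ implies $\deg(E) \geq 0$, which follows immediately because the principal divisor of any nonzero function has degree zero; everything else is bookkeeping with the divisor inequalities defining Riemann-Roch spaces, and no deep result beyond the definitions of $\mathcal{L}(\cdot)$ and the degree map is required.
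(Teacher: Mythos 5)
Your proposal is correct and follows essentially the standard argument: the paper cites this result from Stichtenoth without reproducing a proof, and your analysis of the evaluation map (well-definedness via $\mathrm{supp}(G)\cap\mathrm{supp}(D')=\emptyset$, kernel $\mathcal{L}(G-D')$, and the degree bound $\deg(G-D'')\geq 0$ for a nonzero $x\in\mathcal{L}(G-D'')$) is exactly that textbook proof. No gaps.
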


In the next results, we determine the parameters of $C_{r}$.\\~\\
Consider the set $H$ of pole numbers at $Q_{\infty}$ (i.e.\ Weiestrass semigroup at $Q_{\infty}$). For $b \geq 0$, let $$H(b):=\{u \in H:~u \leq b\}.$$
Then, $|H(b)|=dim(\mathcal{L}(bQ_{\infty}))$. From Lemma $3.2$, we have
$$H(b)=\{u \leq b:~u=iq+jm \text{ with } i \geq 0 \text{ and } 0 \leq j \leq q-1\}.$$
Hence $$|H(b)|=|\{(i,j) \in \mathbb{N}_{0} \times \mathbb{N}_{0};~ j \leq q-1 \text{ and } iq+jm \leq b \}|.$$\\

\begin{theorem}
Suppose that $0 \leq r \leq 2qm-q-m-1$. Then the following holds:
\begin{enumerate}
\item We have $dim (C_{r})=dim(\mathcal{L}(rQ_{\infty}))-dim(\mathcal{L}((r-qm)Q_{\infty}))$. Furthermore\\
\begin{enumerate}
\item For $0 \leq r < qm$, $dim (C_{r})= |H(r)| $.
\item Define $s:=2qm-q-m-1-r$. For $qm \leq r \leq 2qm-q-m-1$, 
$$dim(C_{r})=qm-|H(s)|.$$
\item  For $qm-q-m-1 < r <qm$, $dim (C_{r})=r+1-\frac{(q-1)(m-1)}{2}$.
\end{enumerate}
\vspace{0.2 cm}
\item The minimum distance $d(C_{r})$ of $C_{r}$ satisfies $$d(C_{r}) \geq qm-r.$$
 If $r=qb$, where $0 \leq b < m $ or if $r=cm$, where $0 \leq c <q $, then $d(C_{r})=qm-r$. In addition, if $r \geq qm-q-m$ then $C_{r}$ is not an MDS code.
\end{enumerate}
\end{theorem}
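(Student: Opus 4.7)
The proof hinges on one structural identity: $D \sim qm\,Q_\infty$. Indeed, each place $P_{\alpha_i}\in\mathbb{P}_{K(x)}$ is unramified in $F/K(x)$ (the ramification of this Artin-Schreier-type cover is concentrated at $P_\infty$), so its conorm in $F$ is $\sum_{j=1}^{q} P_{\alpha_i,\beta_j}$. Hence $f(x)=\prod_{i=1}^{m}(x-\alpha_i)$ has divisor $D-qm\,Q_\infty$ in $F$, so $\mathcal{L}(rQ_\infty-D)\cong\mathcal{L}((r-qm)Q_\infty)$ for every $r$, and the general identity in part~(1) is immediate from Theorem~4.3.

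For~(1)(i), $r<qm$ makes $\mathcal{L}((r-qm)Q_\infty)$ trivial, while $\dim\mathcal{L}(rQ_\infty)=|H(r)|$ is the standard count of pole numbers $\le r$ recorded just before the theorem. For~(1)(ii), I would apply Proposition~4.2: the code $\bar a\star C_s$ is monomially equivalent to $C_s$, so $\dim C_r = N-\dim C_s$; the hypothesis $qm\le r\le 2qm-q-m-1$ puts $s:=2qm-q-m-1-r$ in $[0,qm-q-m-1]\subsetneq[0,qm)$, where~(i) gives $\dim C_s=|H(s)|$. For~(1)(iii), the hypothesis $r>qm-q-m-1=2g-2$ activates Riemann-Roch, so $\dim\mathcal{L}(rQ_\infty)=r+1-g$.

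The lower bound $d(C_r)\ge qm-r$ in~(2) is the Goppa bound from Theorem~4.3. For $r=qb$ with $0\le b<m$, take $z:=\prod_{i=1}^{b}(x-\alpha_i)\in\mathcal{L}(rQ_\infty)$; it vanishes at exactly the $qb$ places $P_{\alpha_i,\beta_j}$ with $i\le b$, so the associated codeword has weight $qm-r$, matching the bound. For $r=cm$ with $0\le c<q$, take $z:=\prod_{j=1}^{c}(y-\beta_j)$; the identity $(y-\beta_j)^q+\mu(y-\beta_j)=f(x)$ (valid because $\beta_j$ is a root of $T^q+\mu T$) shows $y-\beta_j$ has pole divisor $m\,Q_\infty$ and zero divisor $\sum_i P_{\alpha_i,\beta_j}$, and the same weight count delivers $qm-r$.

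For the non-MDS assertion, I would focus first on $qm-q-m\le r<qm$, where~(1)(iii) gives $k=r+1-g$ and Singleton reads $d\le qm-r+g$; MDS would demand equality, a gap of $g\ge 1$ above Goppa. The plan is to exhibit codewords of smaller weight from the two-parameter family $z_{b,c}:=\prod_{i=1}^{b}(x-\alpha_i)\prod_{j=1}^{c}(y-\beta_j)$, whose pole order at $Q_\infty$ is $qb+mc$ and whose evaluation vanishes on every $P_{\alpha_i,\beta_j}$ with $i\le b$ or $j\le c$, yielding weight $(m-b)(q-c)$; picking $(b,c)$ with $qb+mc\le r$ and $(m-b)(q-c)<qm-r+g$ precludes MDS. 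The range $r\ge qm$ reduces via Proposition~4.2 to analysing $C_s$. The main obstacle will be the combinatorial case analysis producing admissible $(b,c)$ for each $r$ in the stated range; in particular, when $r$ is a gap of $H=\langle q,m\rangle$ no representation $r=qb+mc$ exists and one must settle for $qb+mc<r$ with compensating slack in the weight bound, and the endpoints (where the dual is a trivial MDS code) deserve separate attention.
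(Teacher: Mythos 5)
Your treatment of part (1) and of the distance claims in part (2) coincides with the paper's: the relation $D \sim qm\,Q_{\infty}$ via the principal divisor $(f(x))=D-qm\,Q_{\infty}$, the reduction of case (ii) to case (i) through Proposition 4.2 and monomial equivalence, Riemann--Roch for case (iii), and the codewords built from $\prod_{i\le b}(x-\alpha_{i})$ and $\prod_{j\le c}(y-\beta_{j})$ are exactly the arguments used there.

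The one genuine gap is in the non-MDS assertion, and it stems from reading the statement more broadly than the paper intends. The paper's proof makes clear that the clause ``if $r\ge qm-q-m$ then $C_{r}$ is not MDS'' is asserted in conjunction with the preceding hypothesis $r=qb$ or $r=cm$: under that hypothesis $d(C_{r})=qm-r$ exactly, and $r\ge qm-q-m$ places $r$ in the Riemann--Roch range so that $\dim C_{r}=r+1-g$ exactly; MDS would then force $qm-r=qm-(r+1-g)+1$, i.e.\ $g=0$, impossible since $\min\{q,m\}\ge 2$ gives $g\ge 1$. This two-line argument uses only your own (1)(iii) and (2). Instead you attempt the unconditional claim for every $r\ge qm-q-m$ via the family $z_{b,c}=\prod_{i\le b}(x-\alpha_{i})\prod_{j\le c}(y-\beta_{j})$ of weight $(m-b)(q-c)$, and you concede that the selection of admissible $(b,c)$ (in particular when $r$ is a gap of $\langle q,m\rangle$, and at the endpoints) is unresolved. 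As written, that portion is a plan rather than a proof; either complete the case analysis or restrict the claim to the special values of $r$, where the Singleton-defect computation closes it immediately.
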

\begin{proof}
\begin{enumerate}
\item As $D \sim qmQ_{\infty}$, from Theorem $4.3$ we have, $$dim (C_{r})=dim(\mathcal{L}(rQ_{\infty}))-dim(\mathcal{L}((r-qm)Q_{\infty}))$$ and if $0 \leq r < qm$, then 
 $dim (C_{r})= dim(\mathcal{L} (r Q_{\infty}))= |H(r)|.$\\~\\
 For $qm \leq r \leq 2qm-q-m-1$, we have $0 \leq s <qm$. So, it follows from Proposition $4.2$
 $$dim~C_{r}=qm-dim~C_{r}^{\perp}=qm-dim~C_{s}=qm-|H(s)|.$$
 If $qm-q-m-1 < r <qm$ (i.e.\ $2g-2< deg(rQ_{\infty}) < N$), then Riemann-Roch theorem yields 
 \vspace{-0.2 cm} $$dim (C_{r})=dim(\mathcal{L}(rQ_{\infty}))=deg(rQ_{\infty})+1-g=r+1-\frac{(q-1)(m-1)}{2}.$$
\item The inequality $d(C_{r}) \geq qm-r$ directly follows from Theorem $4.3$. If $r=qb$, where $0 \leq b < m $, choose $b$ distinct elements from the set $\{\alpha_{1},\cdots, \alpha_{m}\}$. Let us call these elements $\gamma_{1},\cdots,\gamma_{b}$. Then the element
$$z_{1}:=\prod_{j=1}^{b} (x-\gamma_{j}) \in \mathcal{L}(rQ_{\infty})$$
has exactly $qb=r$ distinct zeros in $D$. The weight of the corresponding codeword in $C_{r}$ is $qm-r$. Hence, $d(C_{r})=qm-r$.\\
~\\
Similarly, if $r=cm$, where $0 \leq c < q $, choose $c$ distinct elements from the set $\{\beta_{1},\cdots, \beta_{q}\}$. Let us call these elements $\tau_{1},\cdots,\tau_{c}$. Then the element
$$z_{2}:=\prod_{j=1}^{c} (y-\tau_{j}) \in \mathcal{L}(rQ_{\infty})$$
has exactly $cm=r$ distinct zeros in $D$. The weight of the corresponding codeword in $C_{r}$ is $qm-r$. Hence, $d(C_{r})=qm-r$.\\
~\\
If $r=qb$ and $C_{r}$ is an MDS code, then $d(C_{r})=qm-dim(C_{r})+1$ implies $g=0$, which is not possible. Similarly, for $r=cm$.
\end{enumerate}
\end{proof}
\vspace{-0.5 cm}
In the following theorem, we determine the minimum distance of $C_{r}$ for $qm \leq r \leq 2qm-q-m-1$. Using the ideas from \cite{tru} and Theorem $2.2$, we have the following result.
\begin{theorem}
Assume $m>q$. For $qm \leq r \leq 2qm-q-m-1$ we have $0 \leq r^{\perp}:=2qm-q-m-1-r \leq qm-q-m-1$. Let $t^{\perp} \leq r^{\perp}$ be the largest integer such that $t^{\perp}$ is a pole number at $Q_{\infty}$ i.e.\ $t^{\perp}=aq+bm$ where $0 \leq a \leq m-2$ and $0 \leq b \leq q-1$. Then, the minimum distance of $C_{r}$ satisfies
$$d(C_{r})=a+2.$$
\end{theorem}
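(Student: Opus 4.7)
The plan is to bracket $d(C_r)$ from above and below. For the upper bound I would exhibit an explicit low-weight codeword in $C_r$, and for the lower bound I would apply Theorem 2.2 to a parity-check matrix of $C_r$ obtained from the identification $C_r^\perp = \bar{a} \star C_{r^\perp}$ of Proposition 4.2.

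For the upper bound $d(C_r) \leq a+2$, I would construct a function $z \in \mathcal{L}(rQ_\infty)$ whose image in $C_r$ has exactly $a+2$ non-zero coordinates. When $b = 0$ the natural candidate is $z := \prod_{i=1}^{m-a-2}(x-\alpha_i)\prod_{j=1}^{q-1}(y-\beta_j)$; its pole divisor at $Q_\infty$ has order $(m-a-2)q + (q-1)m = 2qm - (a+2)q - m$, and it vanishes at $P_{\alpha_i,\beta_j}$ whenever $i \leq m-a-2$ or $j \leq q-1$, giving exactly $a+2$ non-zero evaluations. The point to check here is that this pole order is at most $r$, equivalently $r^\perp \leq (a+1)q - 1$; this follows because $t^\perp = aq$ is the largest pole in $H$ not exceeding $r^\perp$ and the next pole in $H$ is $(a+1)q$. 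For $b \geq 1$ the construction needs modification, since the next pole after $aq + bm$ may be smaller than $(a+1)q + bm$; I would adapt $z$ by reducing the number of $y$-factors to $q-1-b$ and enlarging the $x$-part correspondingly, tracking the resulting weight and pole order using the structure of $H = \langle q, m\rangle$.

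For the lower bound $d(C_r) \geq a+2$, I would apply Theorem 2.2, which gives $d(C_r)$ as the minimum $|I|$ for which the columns of a parity-check matrix $\mathbf{H}$ of $C_r$ indexed by $I$ are linearly dependent. By Proposition 4.2 such $\mathbf{H}$ can be taken to have rows $(a_P z_k(P))_{P \in \mathrm{supp}(D)}$, where $\{z_k\}$ is the monomial basis $\{x^i y^j : iq + jm \leq r^\perp,\ 0 \leq j \leq q-1\}$ of $\mathcal{L}(r^\perp Q_\infty)$ from Lemma 3.2(5). Thus it suffices to show that for every $I \subseteq \mathrm{supp}(D)$ with $|I| \leq a+1$, the restricted evaluation $\mathrm{ev}_I : \mathcal{L}(r^\perp Q_\infty) \to K^I$ is surjective. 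The strategy is a Vandermonde-style interpolation: since $aq \leq t^\perp \leq r^\perp$, the monomials $1, x, \ldots, x^a$ all lie in $\mathcal{L}(r^\perp Q_\infty)$ and give a rank-$(a+1)$ evaluation on any set of $a+1$ places with pairwise distinct $x$-coordinates; when several places of $I$ share an $x$-coordinate I would bring in low-degree monomials $x^i y^j$ with $j \geq 1$ to separate them.

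The main obstacle will be verifying this surjectivity uniformly over all $I$ with $|I| \leq a+1$. The delicate case is when several places of $I$ lie above the same rational point of $\mathbb{P}^1$, so that the $x$-monomials cannot distinguish them; then one must confirm that enough monomials $x^i y^j$ with $j \geq 1$ are actually available in $\mathcal{L}(r^\perp Q_\infty)$, which depends on the precise relation between $t^\perp$, $r^\perp$, and the gaps of $H$. I would split into cases by the partition of $I$ into its $x$-columns $I_\alpha := \{P \in I : x(P) = \alpha\}$, handling each via the Weierstrass semigroup structure and the hypothesis $m > q$. Treating $b \geq 1$ uniformly, both in the upper-bound construction and in this interpolation argument, is the technically delicate core of the proof.
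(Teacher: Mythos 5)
Your overall strategy mirrors the paper's: identify a parity-check matrix of $C_r$ with a generator matrix of $\bar a \star C_{r^{\perp}}$ via Proposition 4.2, obtain the upper bound from $a+2$ dependent columns (the paper fixes $\beta$ with $\beta^{q}+\mu\beta=0$ and takes the columns at $P_{\alpha_1,\beta},\dots,P_{\alpha_{a+2},\beta}$; your explicit function $\prod_{i\le m-a-2}(x-\alpha_i)\prod_{j\le q-1}(y-\beta_j)$ is the dual formulation of the same choice, and your pole-order check $r^{\perp}\le (a+1)q-1$ is correct for $b=0$), and obtain the lower bound by showing that every set of $a+1$ columns is independent via a block-Vandermonde determinant built from the monomial basis of $\mathcal{L}(t^{\perp}Q_{\infty})$.

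The step you explicitly defer --- surjectivity of $\mathrm{ev}_I$ when several places of $I$ lie over the same $\alpha$ --- is not a closable technicality: it can fail, and it is exactly the point the paper's own proof leaves unjustified (the assertion that the monomials $x^{i-1}y^{j_i}$, $0\le j_i\le w_i-1$, all belong to the basis of $\mathcal{L}(t^{\perp}Q_{\infty})$ before invoking the determinant factorization from Yang--Kumar). Already for $w_1=2$ and $b=0$ you need $y\in\mathcal{L}(aqQ_{\infty})$, i.e.\ $m\le aq$, which the hypothesis $m>q$ does not provide. Concretely, take $q=2$, $m=5$ (so $g=2$, $N=10$) and $r=qm=10$: then $r^{\perp}=t^{\perp}=2$, $a=1$, $b=0$, and $\mathcal{L}(2Q_{\infty})=\langle 1,x\rangle$, so the two parity-check columns over a common $\alpha$ are proportional; equivalently $\prod_{i=2}^{5}(x-\alpha_i)\in\mathcal{L}(8Q_{\infty})\subset\mathcal{L}(10Q_{\infty})$ yields a codeword of weight $2$, so $d(C_{10})=2<a+2=3$. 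Hence no completion of your interpolation argument (nor of the paper's) can establish the lower bound without strengthening the hypotheses so that enough monomials $x^iy^j$ with $j\ge 1$ lie in $\mathcal{L}(t^{\perp}Q_{\infty})$. Separately, your $b\ge 1$ upper-bound sketch does not land on weight exactly $a+2$: the weight of $\prod_{i\in S}(x-\alpha_i)\prod_{j\in T}(y-\beta_j)$ is $(m-|S|)(q-|T|)$, a multiple of $q-|T|=b+1$, which need not equal $a+2$; this case genuinely requires a different construction.
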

\vspace{-0.7 cm}
\begin{proof}
Let $\mathbf{H}$ be a parity check matrix of $C_{r}$. From Lemma $3.2$, we have $\{1,x,y,\cdots,x^{a}$, $x^{a-1}y, \cdots, y^{b}\}$ is a basis for $\mathcal{L}(t^{\perp}Q_{\infty})$. Choose $\beta \in K$ such that $\beta^{q}+ \mu \beta=0$. Let $\mathbf{H}_{1}$ be a submatrix  of $\mathbf{H}$ with columns corresponding to $P_{\alpha_{1},\beta},\cdots,P_{\alpha_{a+2},\beta}$. We write $\mathbf{H}_{1}$ in the following form using row reduction.
\[
\mathbf{H}_{1}=
  \begin{bmatrix}
    1 & 1 & 1 & \cdots & 1 \\
    \alpha_{1} & \alpha_{2} & \alpha_{3} & \cdots & \alpha_{a+2}\\
    \alpha_{1}^{2} & \alpha_{2}^{2} & \alpha_{3}^{2} & \cdots & \alpha_{a+2}^{2}\\
    \vdots & \vdots & \vdots & ~ & \vdots \\
    \alpha_{1}^{a} & \alpha_{2}^{a} & \alpha_{3}^{a} & \cdots & \alpha_{a+2}^{a}\\
     0 & 0 & 0 & \cdots & 0 \\
     \vdots & \vdots & \vdots & ~ & \vdots \\
     0 & 0 & 0 & \cdots & 0 
  \end{bmatrix}
\]
Here, $rank(\mathbf{H}_{1})=a+1$ and $\mathbf{H}_{1}$ has $a+2$ columns, so the columns of $\mathbf{H}_{1}$ are linearly dependent. Therefore, $d(C_{r}) \leq a+2$.

On the other hand, we choose any $a+1$ distinct columns from $\mathbf{H}$. Let us call this matrix $\mathbf{H}_{2}$. Since each column of $\mathbf{H}$ corresponds to a place $P_{\alpha,\beta}$ of degree one, we reorder columns of $\mathbf{H}_{2}$ according to $\alpha's$ as follows.
\[
  \begin{matrix}
    P_{\alpha_{1},\beta_{1,1}}, & P_{\alpha_{1},\beta_{1,2}}, & \cdots, & P_{\alpha_{1},\beta_{1,w_{1}}}\\
    P_{\alpha_{2},\beta_{2,1}}, & P_{\alpha_{2},\beta_{2,2}}, & \cdots, & P_{\alpha_{1},\beta_{2,w_{2}}}\\
    \vdots & \vdots & \vdots & \vdots \\
    P_{\alpha_{\gamma},\beta_{\gamma,1}} & P_{\alpha_{\gamma},\beta_{\gamma,2}} & \cdots, & P_{\alpha_{\gamma},\beta_{\gamma,w_{\gamma}}}\\
  \end{matrix}
\]
where $\alpha_{i}$'s are pairwise distinct and $w_{1}+w_{2}+\cdots+w_{\gamma}=a+1$ with $w_{1} \geq w_{2} \geq \cdots \geq w_{\gamma} \geq 1$. For $0 \leq j_{i} \leq w_{i}-1; ~1 \leq i \leq \gamma$, $x^{i-1}y^{j_{i}}$ belongs to basis of $\mathcal{L}(t^{\perp}Q_{\infty})$.
We rewrite these basis elements in the form
\[
\begin{matrix}
    1, & y, & y^{2}, & \cdots, & y^{w_{1}-1}\\
    x, & xy, & xy^{2}, & \cdots, & xy^{w_{2}-1}\\
    x^{2}, & x^{2}y, & x^{2}y^{2}, & \cdots, & x^{2}y^{w_{3}-1}\\
    \vdots & \vdots & \vdots & ~ & \vdots \\
    x^{\gamma-1}, & x^{\gamma-1}y, & x^{\gamma-1}y^{2}, & \cdots, & x^{\gamma-1}y^{w_{\gamma}-1}\\
  \end{matrix}
\]
  
Then, we extract an $(a+1) \times (a+1)$ submatrix $\mathbf{H'}$ of $\mathbf{H}_{2}$ such that each row corresponds to a function above in the given order. That is, $\mathbf{H'}=[\mathbf{H'}_{i,j}]$, $i,j=1,2,\cdots,\gamma$ where $\mathbf{H'}_{i,j}$ is a $(w_{i} \times w_{j})$ matrix with $\mathbf{H'}_{i,j}=\alpha_{j}^{i-1}\mathbf{B}_{i,j}$ with
\[
\mathbf{B}_{i,j}=
\begin{bmatrix}
    1 & 1 & 1 & \cdots & 1\\
    \beta_{j,1} & \beta_{j,2} & \beta_{j,3} & \cdots & \beta_{j,w_{j}}\\
    \beta_{j,1}^{2} & \beta_{j,2}^{2} & \beta_{j,3}^{2} & \cdots & \beta_{j,w_{j}}^{2}\\
    \vdots & \vdots & \vdots & ~ & \vdots \\
    \beta_{j,1}^{w_{i}-1} & \beta_{j,2}^{w_{i}-1} & \beta_{j,3}^{w_{i}-1} & \cdots & \beta_{j,w_{j}}^{w_{i}-1}\\
  \end{bmatrix}
\]
Then, from \cite{tru}, Lemma $2$ and Lemma $3$, $$det(\mathbf{H'})=(\prod_{i=1}^{\gamma} det(\mathbf{B}_{i,i})).(\prod_{j=2}^{\gamma} \rho_{j}^{w_{j}})$$
where $$ \rho_{j}=\prod_{i=1}^{j-1}(\alpha_{j}-\alpha_{i}),~j=2,3,\cdots,\gamma.$$
And any $a+1$ columns of $\mathbf{H}$ are linearly independent over $K$. Hence, $d(C_{r}) \geq a+2$.
\end{proof}

\section{\textbf{Generalized Hamming weights of code $C_{r}$}}
In \cite{weakcastle1}, Proposition $3.7$ and $3.8$, the authors have determined bounds on the generalized Hamming weights of $C_{r}$ using the concepts of gonality and order bounds. But in general, the computation of gonality is a difficult task. In this section, we determine the exact values of generalized Hamming weights, in particular, the second generalized Hamming weight of $C_{r}$ in a few cases.\\

Following the ideas of \cite{ghwag}, we have the following lemma.
\begin{lemma}
Let $r \leq qm$ be a pole number at $Q_{\infty}$. Then, $r=iq+jm$ where $i \geq 0$ and $0 \leq j \leq q-1$. If either $i=0$ or $j=0$ then $\exists$ a divisor $0 \leq D' \leq D$ such that $rQ_{\infty} \sim D'$.
\end{lemma}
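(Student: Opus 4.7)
The plan is to produce an explicit effective divisor $D'$ in each of the two cases by exhibiting a function $z \in F$ whose pole divisor is exactly $rQ_{\infty}$ and whose zero divisor is supported on $\mathrm{supp}(D)$ with coefficients at most $1$. Once such a $z$ is found, setting $D' := (z)_{0}$ gives $(z) = D' - rQ_{\infty}$ and hence $rQ_{\infty} \sim D'$, and the containment $D' \leq D$ is automatic from the support condition.

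In the case $j = 0$, we have $r = iq$ with $0 \leq i \leq m$ (because $r \leq qm$). I would pick any $i$ distinct elements $\gamma_{1},\dots,\gamma_{i}$ from $\{\alpha_{1},\dots,\alpha_{m}\}$ and consider
\[
z_{1} := \prod_{k=1}^{i}(x - \gamma_{k}).
\]
Since $(x-\gamma_{k})_{\infty} = qQ_{\infty}$ from Lemma 3.2(4), one checks $(z_{1})_{\infty} = iq\,Q_{\infty} = rQ_{\infty}$. The zeros of $x - \gamma_{k}$ are exactly the $q$ places $P_{\gamma_{k},\beta_{l}}$, $1 \leq l \leq q$, each of multiplicity one (comparing degrees of zero and pole divisors). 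Hence $(z_{1})_{0} = \sum_{k=1}^{i}\sum_{l=1}^{q} P_{\gamma_{k},\beta_{l}} \leq D$, and we take $D' := (z_{1})_{0}$.

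In the case $i = 0$, we have $r = jm$ with $0 \leq j \leq q-1$. I would pick any $j$ distinct elements $\tau_{1},\dots,\tau_{j}$ from $\{\beta_{1},\dots,\beta_{q}\}$ and set
\[
z_{2} := \prod_{k=1}^{j}(y - \tau_{k}).
\]
Using $(y)_{\infty} = mQ_{\infty}$, the pole divisor is $jm\,Q_{\infty} = rQ_{\infty}$. For the zeros, note that at any place $P$ with $y(P) = \tau_{k}$, the defining equation $y^{q} + \mu y = f(x)$ forces $f(x)(P) = \tau_{k}^{q} + \mu \tau_{k} = 0$, so $x(P) = \alpha_{l}$ for some $l$. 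Therefore the zeros of $y - \tau_{k}$ are among the places $P_{\alpha_{l},\tau_{k}}$, $1 \leq l \leq m$. Since there are exactly $m$ such places and the zero divisor has degree $m$, each appears with multiplicity one, giving $(z_{2})_{0} = \sum_{k=1}^{j}\sum_{l=1}^{m} P_{\alpha_{l},\tau_{k}} \leq D$.

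The argument is essentially the same as the minimum-distance computation in Theorem 4.4(2), just repackaged to produce the linear equivalence. The only mildly delicate step, and the one I would double-check, is the multiplicity count for the zeros of $y - \tau_{k}$: it is not automatic from the equation, but follows cleanly by comparing $\deg(y-\tau_{k})_{0} = \deg(y-\tau_{k})_{\infty} = m$ with the number of candidate zero places $P_{\alpha_{l},\tau_{k}}$.
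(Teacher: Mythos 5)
Your proof is correct and follows essentially the same route as the paper: the paper also handles the case $i=0$ by taking $z=\prod_{t=1}^{j}(y-\tau_{t})$ for $j$ distinct roots $\tau_{t}$ of $T^{q}+\mu T$ and setting $D'$ to be its zero divisor, and disposes of the case $j=0$ symmetrically with products of $(x-\gamma_{k})$. Your additional degree-counting justification that each zero of $y-\tau_{k}$ occurs with multiplicity one is a welcome detail the paper leaves implicit.
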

\begin{proof}
 For $i=0$ and $j=0$, $D'=0$ works. Now if $i=0$ and $j \neq 0$, then $r=jm$. Choose $j$ elements from $\beta_{1}, \cdots, \beta_{q}$. Denote these elements by $\tau_{1},\cdots, \tau_{j}$. Define $$z:=\prod_{t=1}^{j}(y-\tau_{t}).$$ Then, $(z)=D'-rQ_{\infty}$ which implies $D' \sim rQ_{\infty}$. 
The lemma can be proved similarly for $j=0$ and $i \neq 0$.\\
\end{proof}

\begin{definition}
A positive integer $r \leq qm$ is said to have property (*) if $r$ is a pole number at $Q_{\infty}$, $r=iq+jm$ for $i \geq 0$, $0 \leq j \leq q-1$ and either $i=0$ or $j=0$.\\
\end{definition} 
 
\begin{theorem}
If for $1 \leq l \leq dim(C_{r})$, $r-p_{l}$ or $qm-r+p_{l}$ has the property (*), then $$d_{l}(C_{r}) \leq qm-r+p_{l}.$$
\end{theorem}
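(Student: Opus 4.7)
The plan is to apply Corollary 1 of \cite{ghwag} as restated in Section 2, which expresses the $l$-th generalized Hamming weight of $C_r$ in the form
\begin{equation*}
d_l(C_r) = \min\bigl\{qm - \deg D'' : 0 \leq D'' \leq D,\ \dim \mathcal{L}(rQ_\infty - D'') \geq l + a\bigr\},
\end{equation*}
where $a := \dim \mathcal{L}(rQ_\infty - D)$. The goal is therefore to exhibit an effective subdivisor $D'' \leq D$ of degree exactly $r - p_l$ whose associated Riemann--Roch space has dimension at least $l + a$; this will immediately yield $d_l(C_r) \leq qm - (r - p_l) = qm - r + p_l$. The essential tool is the principal equivalence $D \sim qm Q_\infty$ supplied by the function $f(x)$ (whose divisor was computed in the course of proving Proposition 4.2), because it lets me translate $rQ_\infty - D''$ into something linearly equivalent to $p_l Q_\infty$, whose space of sections has dimension exactly $l$ by the definition of the pole sequence.

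Under the first alternative, $r - p_l$ has property (*), so Lemma 5.1 directly produces $D'' \leq D$ with $D'' \sim (r - p_l) Q_\infty$; then $rQ_\infty - D'' \sim p_l Q_\infty$, whence $\dim \mathcal{L}(rQ_\infty - D'') = l$. Under the second alternative, $qm - r + p_l$ has property (*) and Lemma 5.1 supplies $E \leq D$ with $E \sim (qm - r + p_l) Q_\infty$; setting $D'' := D - E$ gives $0 \leq D'' \leq D$, $\deg D'' = r - p_l$, and
\begin{equation*}
rQ_\infty - D'' = rQ_\infty - D + E \sim (r - qm) Q_\infty + (qm - r + p_l) Q_\infty = p_l Q_\infty,
\end{equation*}
again yielding $\dim \mathcal{L}(rQ_\infty - D'') = l$. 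In both cases the required witness $D''$ has been produced and the bound $d_l(C_r) \leq qm - r + p_l$ follows.

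The one place I expect friction is the correction term $a$. My construction delivers dimension exactly $l$, whereas the formula nominally demands $l + a$. In the meaningful range $r < qm$ the divisor $rQ_\infty - D$ has negative degree, forcing $a = 0$, and the bound goes through unchanged; this is also the range in which property (*) is naturally applicable, since Lemma 5.1 carries the hypothesis that the relevant pole number is at most $qm$. The rest of the argument is straightforward divisor bookkeeping, with the two cases being the natural ``$D''$ itself'' and ``complement of $E$ in $D$'' constructions that the two halves of the hypothesis invite.
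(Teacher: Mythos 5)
Your proof is correct and follows essentially the same route as the paper: both arguments invoke Lemma 5.1 to produce an effective subdivisor of $D$ linearly equivalent to $(r-p_l)Q_\infty$ (directly in the first case, and as the complement $D-E$ in the second) and then apply Theorem 2.3. Your explicit remark that the abundance $a$ vanishes for $r<qm$ is a point the paper's proof passes over silently, and it is a worthwhile clarification.
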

\vspace{-0.5 cm}
\begin{proof}
If $r-p_{l}$ has the property (*), then from Lemma $5.1$ we have a divisor $D'$ such that $0 \leq D' \leq D$ and $(r-p_{l})Q_{\infty} \sim D'$. So, $$dim(\mathcal{L}(rQ_{\infty}-D'))=dim( \mathcal{L}(p_{l}Q_{\infty}))=l.$$ Thus, from Theorem $2.3$ it follows that $$d_{l}(C_{r}) \leq qm-r+p_{l}.$$\par
 Now if $qm-r+p_{l}$ has the property (*), then again there exists a divisor $D''$ such that $0 \leq D'' \leq D$ and $(qm-r+p_{l})Q_{\infty} \sim D''$.  Also, $qmQ_{\infty} \sim D$, which implies $D-rQ_{\infty}+p_{l}Q_{\infty} \sim D''$. Therefore, $D':=D-D'' \sim (r-p_{l})Q_{\infty}$. Hence,  $$d_{l}(C_{r}) \leq qm-r+p_{l}.$$
\end{proof}
  
An immediate corollary to Theorem $5.3$ is the following.  
\begin{corollary} 
If  for $1 \leq r <qm $, $r$ or $qm-r$ has the property (*), then the minimum distance of $C_{r}$ is  $d(C_{r})=qm-r$.
\end{corollary}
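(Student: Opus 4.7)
The plan is to derive this as an immediate specialization of Theorem~$5.3$ combined with the Goppa lower bound from Theorem~$4.4$. Observe that by the strict monotonicity of generalized Hamming weights (Theorem~$2.4$), the first generalized Hamming weight $d_{1}(C_{r})$ coincides with the minimum distance $d(C_{r})$, and that the first pole number at $Q_{\infty}$ is $p_{1}=0$ (since $1\in \mathcal{L}(0\cdot Q_{\infty})$).

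With these identifications in mind, I would take $l=1$ in Theorem~$5.3$. The hypothesis of that theorem becomes: either $r-p_{1}=r$ has property~(*) or $qm-r+p_{1}=qm-r$ has property~(*). This is precisely the hypothesis of the corollary, so Theorem~$5.3$ yields
\[
d(C_{r}) \;=\; d_{1}(C_{r}) \;\leq\; qm - r + p_{1} \;=\; qm - r.
\]
For the matching lower bound, Theorem~$4.4$(2) (itself coming from the standard Goppa bound, Theorem~$4.3$) gives $d(C_{r})\geq qm-r$. Combining the two inequalities yields the claimed equality $d(C_{r})=qm-r$.

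There is essentially no obstacle here: the corollary is a direct instantiation of Theorem~$5.3$ at $l=1$, together with the observation that $p_{1}=0$ trivializes the correction term $p_{l}$ and the upper bound collapses to the Goppa lower bound. The only thing that needs explicit mention is why $d_{1}(C_{r})=d(C_{r})$, which is immediate from the definition of the first generalized Hamming weight, and the identification of $p_{1}$, both of which are standard and already implicitly used in the preceding discussion.
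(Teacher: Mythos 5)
Your proof is correct and is exactly the argument the paper intends: the corollary is stated as an immediate consequence of Theorem~5.3, obtained by taking $l=1$ with $p_{1}=0$ and combining the resulting upper bound $d_{1}(C_{r})\leq qm-r$ with the Goppa lower bound $d(C_{r})\geq qm-r$ from Theorem~4.4. No further comment is needed.
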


\begin{remark}
 We have $d_{k}(C_{r})=qm$ where $k=dim(C_{r})$.
\end{remark}

From Lemma $3.3$, the Weierstrass semigroup $H=\langle q,m \rangle$ at $Q_{\infty}$ is an embedding dimension two numerical semigroup. Thus, from Theorem $2.5$ we get the following results on the second generalized Hamming weight of $C_{r}$.
\begin{theorem}
Assume that $m >q$. For $r <qm$, $d_{2}(C_{r}) \geq qm-r+q.$\\
\end{theorem}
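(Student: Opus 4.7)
The plan is to match $C_{r}$ to the Feng--Rao distance framework of Theorem 2.5 by first using the duality of Proposition 4.2 to rewrite $C_{r}$ as a differential-form code, and then to invoke the universal lower bound on $\delta_{FR}$ that follows from the symmetry of $H = \langle q,m\rangle$.

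I would begin by recording two numerical facts. Since $m > q$ and $\gcd(q,m)=1$, the second element of $H$ is $\rho_{2} = q$. Since $H$ is symmetric (Lemma 3.3), its conductor equals $c = 2g = (q-1)(m-1)$. Next, set $s := 2qm - q - m - 1 - r$. Applying Proposition 4.2 together with the standard identity $(\bar{a}\star C)^{\perp} = \bar{a}^{-1}\star C^{\perp}$ and $C_{\Omega}(D,G) = C_{\mathcal{L}}(D,G)^{\perp}$ one obtains
\[
C_{r} \;=\; \bar{a}^{-1}\star C_{\Omega}(D,sQ_{\infty}),
\]
so $d_{l}(C_{r}) = d_{l}(C_{\Omega}(D,sQ_{\infty}))$ for every $l$, because a coordinate-wise product by a vector with all nonzero entries preserves each generalized Hamming weight. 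This places $C_{r}$ inside the array of codes to which Theorem 2.5 applies.

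Next I would apply Theorem 2.5 to $C_{\Omega}(D,sQ_{\infty})$ with $l = 2$:
\[
d_{2}(C_{r}) \;\geq\; \delta_{FR}(s+1) + \rho_{2} \;=\; \delta_{FR}(s+1) + q.
\]
Since $r \leq qm - 1$, one checks that
\[
s+1 \;=\; 2qm - q - m - r \;\geq\; qm - q - m + 1 \;=\; c,
\]
so the universal estimate $\delta_{FR}(x) \geq x + 1 - 2g$ (valid for $x \geq c$) gives
\[
\delta_{FR}(s+1) \;\geq\; s + 2 - 2g \;=\; (2qm - q - m + 1 - r) - (q-1)(m-1) \;=\; qm - r.
\]
Combining the two displayed inequalities yields the claimed bound $d_{2}(C_{r}) \geq qm - r + q$.

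The main obstacle in this sketch is the bookkeeping that identifies $C_{\Omega}(D,sQ_{\infty})$ as a member of the array of codes invoked in Theorem 2.5 and justifies that the scaling by $\bar{a}^{-1}$ has no effect on $d_{2}$; once that identification is granted, the remainder collapses to the easy genus computation $2g = (q-1)(m-1)$ and the observation that the first positive pole number at $Q_{\infty}$ is $\rho_{2} = q$. Note also that the proof tacitly requires $\dim C_{r} \geq 2$, i.e.\ $r \geq q$, which is the natural range in which $d_{2}(C_{r})$ is defined.
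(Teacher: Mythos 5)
Your argument is correct and is essentially the paper's own: the paper derives Theorem 5.6 from Theorem 2.5 via the duality $C_{r}^{\perp}=\bar{a}\star C_{2qm-q-m-1-r}$ and the estimate $\delta_{FR}(x)\geq x+1-2g$ for $x\geq c$, exactly the chain of inequalities you write out (the computation appears explicitly in the lower-bound half of the proof of Theorem 5.7). Your added care about the coordinate-wise scaling by $\bar{a}$ leaving the generalized Hamming weights unchanged, and the remark that $\rho_{2}=q$ requires $m>q$, are both consistent with what the paper tacitly assumes.
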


\begin{theorem}
Assume that $m >q$. For $r <qm-1$, if $r-q$ or $qm-r+q$ satisfies the property (*), then $$d_{2}(C_{r})=qm-r+q.$$
\end{theorem}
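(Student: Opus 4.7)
The goal is to upgrade the inequality in Theorem~5.6 to an equality, so the substantive task is to produce a matching upper bound $d_2(C_r) \leq qm - r + q$. The natural tool is Theorem~5.3, which gives upper bounds of the form $d_l(C_r) \leq qm - r + p_l$ whenever either $r - p_l$ or $qm - r + p_l$ satisfies property~(*), and I would simply specialize it to $l = 2$.

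The one identification needed is the value of $p_2$. Since the Weierstrass semigroup at $Q_\infty$ is $H = \langle q, m \rangle$ by Lemma~3.3, and since $m > q$ by hypothesis, the first two pole numbers are $p_1 = 0$ and $p_2 = q$. With $p_2 = q$, the phrase ``$r - q$ or $qm - r + q$ satisfies property~(*)'' is \emph{verbatim} the hypothesis of Theorem~5.3 at $l = 2$, so Theorem~5.3 yields $d_2(C_r) \leq qm - r + p_2 = qm - r + q$. Combining this with Theorem~5.6 gives the desired equality.

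Before invoking Theorem~5.3 one must check that $2 \leq \dim(C_r)$ so that $d_2(C_r)$ is meaningful. Either form of property~(*) in the hypothesis forces $r \geq q$: in the first case $r - q \geq 0$ by definition, and in the second case property~(*) for $qm - r + q$ includes the condition $qm - r + q \leq qm$, i.e.\ $r \geq q$. Together with $r < qm - 1$ this gives $\{0, q\} \subseteq H(r)$, hence $\dim \mathcal{L}(rQ_\infty) \geq 2$ and therefore $\dim(C_r) \geq 2$ by Theorem~4.4(1)(a). The main obstacle is essentially none of substance: the machinery was already set up in Theorems~5.3 and~5.6, and the only place where one could slip is in matching the two-sided ``$r - q$ or $qm - r + q$'' condition to the two-sided hypothesis of Theorem~5.3 with the correct identification $p_2 = q$, which is exactly where the assumption $m > q$ is used.
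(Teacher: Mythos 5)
Your proof is correct and follows essentially the same route as the paper: the upper bound comes from Theorem~5.3 with $l=2$ and $p_2=q$ (using $m>q$), and the lower bound is the one in Theorem~5.6. The only cosmetic difference is that the paper re-derives that lower bound explicitly via the Feng--Rao distance $\delta_{FR}(2qm-q-m-r)+q$ applied to the dual code, whereas you simply cite Theorem~5.6, which states exactly the needed inequality under the same hypotheses.
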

\begin{proof}
Applying Theorem $5.3$, we get $d_{2}(C_{r})\leq qm-r+q.$\\~\\
On the other hand, as $\rho_{2}=q$ and as the dual code of $C_{r}$ form an array of codes ( for details see \cite{array}), from Theorem $2.5$ and Proposition $4.2$, we have 
\begin{align*}
d_{2}(C_{r})&=d_{2}(C_{2qm-q-m-1-r}^{\perp})\\
            &\geq \delta_{FR}(2qm-q-m-r)+q.
\end{align*}
Since $r<qm$, we have $2qm-q-m-r \geq 2g=qm-q-m+1$, therefore
\begin{align*}
d_{2}(C_{r})&\geq \delta_{FR}(2qm-q-m-r)+q\\
&\geq 2qm-q-m-r+1-(qm-q-m+1)+q\\
&=qm-r+q.
\end{align*}
Hence proved.
\end{proof}

Theorem $5.7$ can be generalised for all $l$, $1 \leq l \leq dim(C_{r})$.
\begin{theorem}
For $r <qm-1$ and $1 \leq l \leq dim(C_{r})$, if $r-p_{l}$ or $qm-r+p_{l}$ satisfies the property (*), then $$d_{l}(C_{r})=qm-r+p_{l}.$$
\end{theorem}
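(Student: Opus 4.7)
The plan is to mimic the two-step argument of Theorem 5.7, swapping the second pole number $q = \rho_2$ for the general pole number $p_l = \rho_l$; the only substantive check is that the Feng--Rao bound still applies to the shifted argument. First I would establish the upper bound $d_l(C_r) \le qm - r + p_l$ by a direct appeal to Theorem 5.3, which was proved for arbitrary $l$ with $1 \le l \le \dim(C_r)$ under exactly the hypothesis that $r-p_l$ or $qm-r+p_l$ has property (*).

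For the lower bound, I would first note that by Proposition 4.2 we have $C_r^{\perp} = \bar a \star C_{2qm-q-m-1-r}$, and since multiplying coordinates by nonzero scalars is a monomial equivalence it preserves all generalized Hamming weights, so
\begin{equation*}
d_l(C_r) = d_l\bigl((C_{2qm-q-m-1-r})^{\perp}\bigr).
\end{equation*}
Because the Weierstrass semigroup $H = \langle q, m \rangle$ at $Q_\infty$ has embedding dimension two and the dual family $\{C_t^{\perp}\}$ is an array of codes in the sense of \cite{array}, Theorem 2.5 applies with $t = 2qm - q - m - 1 - r$, yielding
\begin{equation*}
d_l(C_r) \ge \delta_{FR}(2qm - q - m - r) + \rho_l = \delta_{FR}(2qm - q - m - r) + p_l.
\end{equation*}

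The key quantitative step is to bound the Feng--Rao distance from below using the standard inequality $\delta_{FR}(x) \ge x + 1 - 2g$, valid whenever $x$ is at least the conductor of $H$. Since $H$ is symmetric with $2g = (q-1)(m-1) = qm - q - m + 1$, its conductor equals $2g$. The hypothesis $r < qm - 1$ gives $r \le qm - 2$, hence
\begin{equation*}
2qm - q - m - r \;\ge\; qm - q - m + 2 \;>\; 2g,
\end{equation*}
so the bound applies and produces $\delta_{FR}(2qm - q - m - r) \ge (2qm - q - m - r) + 1 - (qm - q - m + 1) = qm - r$. Combining, $d_l(C_r) \ge qm - r + p_l$, matching the upper bound and finishing the proof.

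The main obstacle, if any, is simply verifying that nothing in Theorem 2.5 or the conductor estimate degenerates at the boundary value $r = qm - 1$ (which is why the hypothesis is the strict inequality $r < qm - 1$); once the interval condition is met the argument is a clean repetition of the $l=2$ case. All other ingredients, in particular the passage from the upper bound of Theorem 5.3 for general $l$ and the monomial equivalence preserving $d_l$, are either explicitly in the excerpt or immediate from the definition of the generalized Hamming weight.
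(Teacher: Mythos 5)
Your proposal is correct and follows essentially the same route as the paper: the paper's own proof of this theorem simply cites Theorem 5.3 for the upper bound and states that the lower bound ``follows from the proof of Theorem 5.7,'' which is exactly the Feng--Rao/array-of-codes argument you spell out with $\rho_l = p_l$ in place of $\rho_2 = q$. Your explicit verification of the conductor condition and of the monomial-equivalence step is a faithful (and slightly more careful) expansion of what the paper leaves implicit.
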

\begin{proof}
From Theorem $5.3$ we have $$d_{l}(C_{r}) \leq qm-r+p_{l}.$$ 
The reverse inequality follows from the proof of Theorem $5.7$.
\end{proof}

The following result is stated in Munuera \cite{ghwag}.
\begin{theorem}{$($\cite{ghwag}, Proposition $4)$}
Let $C_{\mathcal{L}}(D',G)$ be a code of dimension $k$ and abundance $dim(\mathcal{L}(G-D'))=:a \geq 0$. If there is a place $Q$ of degree one not in $D'$ and $C_{\mathcal{L}}(D',G-p_{l+a}Q) \neq \{0\}$, where $p_{l}$ is $l$-th pole number at $Q$, then for every $l$, $1 \leq l \leq k$,
$$d_{l}(C_{\mathcal{L}}(D',G)) \leq d_{1}(C_{\mathcal{L}}(D',G-p_{l+a}Q)).$$
\end{theorem}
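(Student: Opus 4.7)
The plan is to construct an $l$-dimensional subcode of $C_{\mathcal{L}}(D',G)$ whose coordinate support is contained in the support of a minimum-weight codeword of the auxiliary code $C_{\mathcal{L}}(D',G-p_{l+a}Q)$. Since that auxiliary code is nonzero by hypothesis, I would first pick $f \in \mathcal{L}(G-p_{l+a}Q)$ whose evaluation at $D'$ realises the minimum weight $d^{*} := d_{1}(C_{\mathcal{L}}(D',G-p_{l+a}Q))$, and let $T$ denote the support of this codeword, so $|T| = d^{*}$.

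The second step is to introduce the multiplication-by-$f$ map
$$\phi : \mathcal{L}(p_{l+a}Q) \to C_{\mathcal{L}}(D',G), \qquad g \mapsto (f(P_1)g(P_1),\ldots,f(P_n)g(P_n)).$$
A valuation check at $Q$ using $v_Q(f) \geq p_{l+a}-v_Q(G)$ and $v_Q(g) \geq -p_{l+a}$, combined with the trivial inequalities at the remaining places (where $g$ is regular and $f$ already lies in $\mathcal{L}(G)$), shows $fg \in \mathcal{L}(G)$, so $\phi$ genuinely lands in $C_{\mathcal{L}}(D',G)$. By the definition of pole numbers $\dim \mathcal{L}(p_{l+a}Q) = l+a$, so to obtain an $l$-dimensional subcode it suffices to bound $\dim \ker \phi$ by $a$. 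If $\phi(g) = 0$ then $fg$ vanishes at every place of $\mathrm{supp}(D')$, hence $fg \in \mathcal{L}(G-D')$; since $f \neq 0$, multiplication by $f$ is injective on $F$, so $\ker \phi$ embeds into $\mathcal{L}(G-D')$, giving $\dim \ker \phi \leq a$. Therefore $\dim \mathrm{Im}(\phi) \geq l$. Picking any $l$-dimensional subspace $V \subseteq \mathrm{Im}(\phi)$, the identity $\phi(g)_i = f(P_i)g(P_i)$ forces $\mathrm{supp}(V) \subseteq T$, whence $d_l(C_{\mathcal{L}}(D',G)) \leq |T| = d^{*}$ by the definition of the generalized Hamming weight.

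The only delicate point is this kernel estimate, which is what converts the dimension gain $l+a$ coming from $\mathcal{L}(p_{l+a}Q)$ into the desired $l$; everything else is valuation bookkeeping around the injective map $g \mapsto fg$. The hypothesis $Q \notin \mathrm{supp}(D')$ is used to ensure the auxiliary code is well defined and to keep the valuation argument at $Q$ decoupled from the evaluation places, while the assumption that $C_{\mathcal{L}}(D',G-p_{l+a}Q) \neq \{0\}$ is precisely what guarantees the existence of the minimum-weight witness $f$.
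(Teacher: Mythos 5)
Your argument is correct and complete: the multiplication map $g\mapsto fg$ from $\mathcal{L}(p_{l+a}Q)$ into $\mathcal{L}(G)$, the kernel bound $\dim\ker\phi\le\dim\mathcal{L}(G-D')=a$, and the support containment in $T$ together give exactly the claimed inequality. The paper itself states this result without proof (citing Munuera's Proposition 4), and your construction is precisely the standard argument used there, so there is nothing to add.
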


Using Theorem $5.6$ and Theorem $5.9$, we get the following result.
\begin{theorem}
Assume $m >q$. For $r< qm$. If $r-q$ and $qm-r+q$ doesn't satisfy the property (*), then $$qm-r+q \leq d_{2}(C_{r}) \leq qm-\overline{(r-q)}.$$
where $\overline{(r-q)}$ is the largest pole number less than or equal $(r-q)$ that satisfies the property (*).
\end{theorem}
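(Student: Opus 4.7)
The plan is to combine the already-established lower bound with a short construction supplying the upper bound.

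For the lower bound, I would simply invoke Theorem $5.6$: the hypothesis $m>q$ and $r<qm$ are satisfied, so $d_{2}(C_{r}) \geq qm-r+q$ unconditionally; in particular, the assumption that $r-q$ and $qm-r+q$ fail to satisfy property $(*)$ is not needed here.

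For the upper bound, I would apply Theorem $5.9$ to $C_{\mathcal{L}}(D,rQ_{\infty})$ with auxiliary place $Q=Q_{\infty}$ and with $l=2$. The place $Q_{\infty}$ is not in $\operatorname{supp}(D)$ since $D$ consists only of finite places $P_{\alpha_{i},\beta_{j}}$. Because $r<qm$, the abundance is $a=\dim\mathcal{L}((r-qm)Q_{\infty})=0$. Since $H=\langle q,m\rangle$ with $m>q$, the second pole number at $Q_{\infty}$ is $p_{2}=q$. The hypothesis $C_{\mathcal{L}}(D,(r-q)Q_{\infty})\neq\{0\}$ is automatic from the implicit requirement $\dim C_{r}\geq 2$ (which forces $r\geq q$, hence $\dim\mathcal{L}((r-q)Q_{\infty})\geq 1$). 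Theorem $5.9$ then yields
\[
d_{2}(C_{r}) \;\leq\; d_{1}\bigl(C_{\mathcal{L}}(D,(r-q)Q_{\infty})\bigr) \;=\; d(C_{r-q}).
\]

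It remains to bound $d(C_{r-q})$ from above by $qm-\overline{(r-q)}$. The key observation is that, by definition of $\overline{(r-q)}$, one has $\overline{(r-q)}\leq r-q$, so $\mathcal{L}(\overline{(r-q)}Q_{\infty})\subseteq\mathcal{L}((r-q)Q_{\infty})$, which implies the code inclusion $C_{\overline{(r-q)}}\subseteq C_{r-q}$. Taking minimum weights over a larger code can only decrease them, so $d(C_{r-q})\leq d(C_{\overline{(r-q)}})$. Since $\overline{(r-q)}$ satisfies property $(*)$ by construction, Corollary $5.4$ gives $d(C_{\overline{(r-q)}})=qm-\overline{(r-q)}$. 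Chaining the three inequalities produces the desired upper bound.

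The argument is almost entirely assembly: no new construction of codewords is required because the work is offloaded to Corollary $5.4$, and no new dimension computation is needed because Theorem $5.9$ is applied off the shelf. The main step that requires care is the verification of the hypotheses of Theorem $5.9$ — specifically identifying $p_{2}=q$ from the structure of $H=\langle q,m\rangle$ under the assumption $m>q$, confirming $a=0$, and checking that the intermediate code $C_{r-q}$ is nonzero — but none of these is a genuine obstacle.
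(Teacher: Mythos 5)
Your proof is correct and follows exactly the route the paper intends: the lower bound is Theorem 5.6 verbatim, and the upper bound comes from Theorem 5.9 with $Q=Q_{\infty}$, $a=0$, $p_{2}=q$ (the paper gives no written proof beyond citing these two theorems). The one step the paper leaves implicit — passing from $d_{1}(C_{r-q})$ to $qm-\overline{(r-q)}$ — you supply correctly via the inclusion $C_{\overline{(r-q)}}\subseteq C_{r-q}$ together with Corollary 5.4.
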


\begin{theorem}
For $1 \leq l \leq dim~C_{r}$ and $r< qm$. If $r-p_{l}$ and $qm-r+p_{l}$ doesn't satisfy the property (*), then $$d_{l}(C_{r}) \leq qm-\overline{(r-p_{l})}.$$
where $\overline{(r-p_{l})}$ is the largest pole number less than or equal $(r-p_{l})$ that satisfies the property (*).
\end{theorem}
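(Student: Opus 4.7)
The plan is to bound $d_l(C_r)$ from above by the minimum distance of a smaller one-point Goppa code that does satisfy property $(*)$, so that Corollary~5.4 becomes applicable. The reduction proceeds in two steps: first pass from $C_r$ to $C_{r-p_l}$ using the chain-reduction bound of Theorem~5.9, and then pass from $C_{r-p_l}$ down to $C_{\overline{(r-p_l)}}$ using the monotonicity of Goppa codes in the coefficient of $Q_{\infty}$.

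To carry out the first step, I would apply Theorem~5.9 with $D' = D$, $G = rQ_{\infty}$, and $Q = Q_{\infty}$. Three hypotheses need to be checked: the place $Q_{\infty}$ is disjoint from $D$, which is immediate; the abundance $a := \dim \mathcal{L}(rQ_{\infty} - D)$ vanishes because $\deg(rQ_{\infty} - D) = r - qm < 0$; and the code $C_{\mathcal{L}}(D,(r-p_l)Q_{\infty}) = C_{r-p_l}$ is nonzero, which holds because $p_l \in H(r)$ (since $1 \leq l \leq \dim C_r = |H(r)|$ for $r < qm$), whence $p_l \leq r$ and the constant function $1$ lies in $\mathcal{L}((r-p_l)Q_{\infty})$. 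Theorem~5.9 then yields
$$d_l(C_r) \leq d_1(C_{r-p_l}).$$

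For the second step, note that $\overline{(r-p_l)} \leq r-p_l$ gives an inclusion $\mathcal{L}(\overline{(r-p_l)}\,Q_{\infty}) \subseteq \mathcal{L}((r-p_l)Q_{\infty})$, and therefore $C_{\overline{(r-p_l)}} \subseteq C_{r-p_l}$ as subspaces of $K^{N}$. The minimum distance of an ambient code is at most that of any nonzero subcode, so
$$d_1(C_{r-p_l}) \leq d_1(C_{\overline{(r-p_l)}}).$$
Since $\overline{(r-p_l)}$ satisfies property $(*)$ by construction and $\overline{(r-p_l)} \leq r < qm$, Corollary~5.4 applied to $C_{\overline{(r-p_l)}}$ gives $d_1(C_{\overline{(r-p_l)}}) = qm - \overline{(r-p_l)}$. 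Chaining the three displayed inequalities produces the desired upper bound.

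The only delicate point is the hypothesis-checking for Theorem~5.9; in particular one should confirm that the running assumption $r < qm$ keeps the abundance at zero and that $p_l \leq r$ throughout the intended range $1 \leq l \leq \dim C_r$. Beyond this, the argument is a direct generalization of the upper-bound half of Theorem~5.10, and no new ideas are required.
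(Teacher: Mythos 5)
Your argument is correct and follows exactly the route the paper intends (it gives no explicit proof of this theorem, but flags Theorem 5.9 as the source of the analogous bound in Theorem 5.10): apply Theorem 5.9 with $Q=Q_{\infty}$ and abundance $a=0$ to get $d_{l}(C_{r})\leq d_{1}(C_{r-p_{l}})$, then use the subcode inclusion $C_{\overline{(r-p_{l})}}\subseteq C_{r-p_{l}}$ and Corollary 5.4. Your hypothesis-checking (in particular $p_{l}\leq r$ so that $C_{r-p_{l}}\neq\{0\}$) is the only nontrivial point and is handled correctly.
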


In the following results, we determine the second generalized Hamming weight of $C_{r}$ when $qm \leq r$. To prove the results, we use the following result stated in Munuera \cite{ghwag}.
\begin{theorem}{$($\cite{ghwag}, Proposition $6)$}
Let $C=C_{\mathcal{L}}(D',G)$ be a code of dimension $k$ and  $dim(\mathcal{L}(G-D'))=:a > 0$. Then, for $1 \leq l \leq k$, we have $d_{l}(C) \leq deg(D'')$ for every effective divisor $D'' \leq D$ such that $dim(\mathcal{L}(D''))>l$.\\
\end{theorem}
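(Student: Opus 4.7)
The strategy is to derive the bound from the divisor characterization of generalized Hamming weights in Theorem $2.3$, combined with the classical ``product of linear systems'' inequality. In our setting this says that for any two nonzero finite-dimensional $K$-subspaces $V, W$ of the function field $F$, the $K$-span of their products satisfies $\dim_K(V \cdot W) \geq \dim_K V + \dim_K W - 1$. This is a standard fact (a form of the Hopf--Kneser inequality for subspaces of a field extension); I would quote it rather than reprove it.

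The main argument then runs as follows. Since $(fg) \geq -(G - D') - D''$ for $f \in \mathcal{L}(G-D')$ and $g \in \mathcal{L}(D'')$, the multiplication map sends $\mathcal{L}(G - D') \cdot \mathcal{L}(D'')$ into $\mathcal{L}(G - D' + D'')$. Applying the product inequality with $V = \mathcal{L}(G - D')$ (so $\dim V = a > 0$ by hypothesis) and $W = \mathcal{L}(D'')$ (so $\dim W > l$ by hypothesis) yields
\[
\dim \mathcal{L}(G - D' + D'') \geq a + \dim \mathcal{L}(D'') - 1 \geq a + l.
\]
Now invoke Theorem $2.3$ in the form
\[
d_l(C) = \min\{\deg(E) : 0 \leq E \leq D',\ \dim \mathcal{L}(G - D' + E) \geq l + a\};
\]
the divisor $E := D''$ (effective and dominated by $D'$, with the required dimension bound just established) is a valid witness, so $d_l(C) \leq \deg(D'')$, as desired.

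\textbf{Main obstacle.} The only genuinely non-trivial step is the product-of-linear-systems inequality; once that is granted, the rest is bookkeeping with divisors and a direct application of Theorem $2.3$. The hypothesis $a > 0$ is essential: it guarantees that the dimension contribution from $\mathcal{L}(G - D')$ is positive, which is precisely what makes the inequality $a + \dim \mathcal{L}(D'') - 1 \geq a + l$ deliver the sharp bound. A subtle point worth verifying is that $G - D'$ need not be effective (in fact, if $\mathrm{supp}(G) \cap \mathrm{supp}(D') = \emptyset$ then $G - D'$ has strictly negative components on $D'$), but the product inequality is phrased purely in terms of nonzero $K$-subspaces of $F$ and so applies without requiring effectivity of either divisor.
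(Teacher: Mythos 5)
Your argument is correct, and in fact the paper offers nothing to compare it against: Theorem $5.12$ is imported verbatim from Munuera's paper (Proposition $6$ of the cited reference) and is never proved in this note. Your reduction is the right one: with $D''\leq D'$ effective (the ``$D''\leq D$'' in the statement is evidently a typo for $D''\leq D'$, and you read it that way), multiplication maps $\mathcal{L}(G-D')\cdot\mathcal{L}(D'')$ into $\mathcal{L}(G-D'+D'')$, the product inequality gives $\dim\mathcal{L}(G-D'+D'')\geq a+(l+1)-1=l+a$, and Theorem $2.3$ with witness $E=D''$ yields $d_l(C)\leq\deg(D'')$. The one place where your justification should be tightened is the product inequality itself: as a statement about arbitrary finite-dimensional $K$-subspaces of an arbitrary field extension it is \emph{false} (take $V=W=\mathbb{F}_{q^2}$ inside $\mathbb{F}_{q^2}(t)/\mathbb{F}_q$: then $\dim VW=2<3$). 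The Kneser-type theorem of Hou--Leung--Xiang has a stabilizer term $\dim_K H$, and one must argue that $H=K$; this holds here because $K=\mathbb{F}_{p^s}$ is perfect and is the exact constant field of $F'$, so the finite intermediate field $H$ collapses to $K$. Alternatively, and more elementarily, the inequality $\dim\mathcal{L}(A+B)\geq\dim\mathcal{L}(A)+\dim\mathcal{L}(B)-1$ for Riemann--Roch spaces follows by fixing a rational place $P$ (the code's setup supplies $n$ of them): the value set $\{\nu_P(f):0\neq f\in V\}$ has exactly $\dim V$ elements, $\nu_P$ is additive on products, a sumset of integers has size at least $|S|+|T|-1$, and functions with pairwise distinct valuations are linearly independent. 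With either justification in place, your proof is complete; your remarks about why $a>0$ is essential and why effectivity of $G-D'$ is irrelevant are both accurate.
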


\begin{theorem}
Assume that $m >q$. If $qm \leq r \leq 2qm-q-m-1$, then $$d_{2}(C_{r}) \leq \text{min} \{2q, m\}.$$
\end{theorem}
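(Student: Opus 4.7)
The plan is to apply Theorem 5.12 with $l=2$. The abundance hypothesis $a=\dim\mathcal{L}((r-qm)Q_\infty)>0$ is immediate: since $r\geq qm$ we have $r-qm\geq 0$, so $\mathcal{L}((r-qm)Q_\infty)$ already contains the constants and $a\geq 1$. The theorem then reduces the problem to exhibiting, for each of the two bounds, an effective sub-divisor $D''\leq D$ with $\dim\mathcal{L}(D'')\geq 3$ whose degree realises that bound.

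For the bound $d_2(C_r)\leq 2q$, I would take
\[
D''\;:=\;\sum_{j=1}^{q}P_{\alpha_1,\beta_j}\;+\;\sum_{j=1}^{q}P_{\alpha_2,\beta_j},
\]
which is clearly an effective sub-divisor of $D$ of degree $2q$. Using the principal divisors $(x-\alpha_i)=\sum_{j=1}^{q}P_{\alpha_i,\beta_j}-qQ_\infty$ from the pole divisor $(x)_\infty=qQ_\infty$ of Lemma 3.2, one sees that $D''\sim 2qQ_\infty$, so $\dim\mathcal{L}(D'')=|H(2q)|$. Because $H=\langle q,m\rangle$ with $m>q\geq 2$, the three semigroup elements $0,q,2q$ are distinct and all $\leq 2q$, giving $|H(2q)|\geq 3$. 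Theorem 5.12 then yields $d_2(C_r)\leq 2q$.

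For the bound $d_2(C_r)\leq m$, I would instead take
\[
D''\;:=\;\sum_{i=1}^{m}P_{\alpha_i,\beta_1},
\]
an effective sub-divisor of $D$ of degree $m$. From $(y-\beta_1)=\sum_{i=1}^{m}P_{\alpha_i,\beta_1}-mQ_\infty$ (using $(y)_\infty=mQ_\infty$) we get $D''\sim mQ_\infty$, hence $\dim\mathcal{L}(D'')=|H(m)|$. Since $\{0,q,m\}\subseteq H(m)$ and these three numbers are distinct (using $m>q$), we obtain $|H(m)|\geq 3$, and Theorem 5.12 again applies, giving $d_2(C_r)\leq m$. Combining the two bounds yields $d_2(C_r)\leq\min\{2q,m\}$.

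There is no substantive obstacle here; the only points to verify are the trivial dimension counts $|H(2q)|\geq 3$ and $|H(m)|\geq 3$, both of which follow immediately from $m>q$ and $H=\langle q,m\rangle$. The essential idea is simply that the fibres of the maps $x$ and $y$ over a $K$-rational value produce effective divisors on $D$ linearly equivalent to $qQ_\infty$ and $mQ_\infty$ respectively, so doubling the $x$-fibre (or taking a single $y$-fibre) lands in a Riemann--Roch space of dimension at least $3$.
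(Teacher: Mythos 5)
Your proposal is correct and follows essentially the same route as the paper: both apply Theorem 5.12 with $l=2$ to an effective sub-divisor of $D$ linearly equivalent to $p_3Q_\infty$, where $p_3=\min\{2q,m\}$ is the third pole number. The only cosmetic difference is that the paper invokes Lemma 5.1 once for $p_3$ (which has property (*) in either case) and gets $\dim\mathcal{L}(D')=3$ exactly, whereas you construct the two fibre divisors explicitly and prove the two bounds $2q$ and $m$ separately before taking the minimum.
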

\vspace{-0.6 cm}
\begin{proof}
Since $p_{3}=\text{min} \{2q, m\}$ therefore, from Lemma $5.1$, $\exists$ a divisor $D'$ such that $0\leq D' \leq D$ and $p_{3}Q_{\infty} \sim D'$. Then, $dim(\mathcal{L}(D'))=dim(\mathcal{L}(p_{3}Q_{\infty}))=3$. Hence, $d_{2}(C_{r}) \leq deg~D'=p_{3}=\text{min} \{2q, m\}$.\\
\end{proof}

\begin{theorem}
Assume that $m>2q$. Suppose $qm \leq r \leq 2qm-q-m-1$. Then $0 \leq r^{\perp}:=2qm-q-m-1-r \leq qm-q-m-1$. Let $t^{\perp} \leq r^{\perp}$ be the largest integer such that $t^{\perp}$ is a pole number at $Q_{\infty}$ i.e.\ $t^{\perp}=aq+bm$ where $0 \leq a \leq m-3$ and $0 \leq b \leq q-1$. Then,
$$d_{2}(C_{r})=a+3.$$
\end{theorem}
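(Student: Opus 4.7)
The plan is to apply Theorem $2.2$, which characterizes $d_{l}(C) = \min\{|I| : |I| - \mathrm{rank}(\langle \mathbf{h}_{i} : i \in I \rangle) \geq l\}$ in terms of the columns $\mathbf{h}_{i}$ of a parity-check matrix $\mathbf{H}$ of $C$. Using the duality $C_{r}^{\perp} = \bar{a} \star C_{r^{\perp}}$ from Proposition $4.2$, together with the observation $\mathcal{L}(r^{\perp}Q_{\infty}) = \mathcal{L}(t^{\perp}Q_{\infty})$ forced by the choice of $t^{\perp}$, one may take $\mathbf{H}$ to be (up to a diagonal rescaling, which does not affect linear dependencies among columns) the evaluation matrix of the basis $\{x^{i}y^{j} : qi + mj \leq t^{\perp},\ 0 \leq j \leq q-1\}$ of $\mathcal{L}(t^{\perp}Q_{\infty})$ from Lemma $3.2$ at the $qm$ rational places appearing in $D$.

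The lower bound $d_{2}(C_{r}) \geq a + 3$ is immediate from Theorem $4.5$: since $d_{1}(C_{r}) = a + 2$, no $a + 1$ columns of $\mathbf{H}$ can be linearly dependent, so every subset of size at most $a + 2$ has rank at least $a + 1$ and therefore defect at most $1$. The first index set $I$ for which the defect can reach $2$ must consequently satisfy $|I| \geq a + 3$.

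For the matching upper bound $d_{2}(C_{r}) \leq a + 3$, the idea is to enlarge the dependent column family used in the proof of Theorem $4.5$ by one extra place. Fix $\beta \in K$ with $\beta^{q} + \mu \beta = 0$ and let $\mathbf{H}_{1}^{+}$ denote the submatrix of $\mathbf{H}$ whose columns correspond to $P_{\alpha_{1},\beta}, \ldots, P_{\alpha_{a+3},\beta}$; this is permissible because $a + 3 \leq m$ follows from $a \leq m - 3$. The row of $\mathbf{H}_{1}^{+}$ indexed by the basis element $x^{i}y^{j}$ equals $\beta^{j}(\alpha_{1}^{i}, \ldots, \alpha_{a+3}^{i})$, a scalar multiple of $(\alpha_{1}^{i}, \ldots, \alpha_{a+3}^{i})$. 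Row reducing as in Theorem $4.5$ eliminates every row with $j \geq 1$ and leaves only the Vandermonde-type rows $(\alpha_{1}^{i}, \ldots, \alpha_{a+3}^{i})$ for $i = 0, 1, \ldots, a$, so $\mathbf{H}_{1}^{+}$ has rank exactly $a + 1$ and the chosen $a + 3$ columns carry defect $(a + 3) - (a + 1) = 2$. Theorem $2.2$ then yields $d_{2}(C_{r}) \leq a + 3$.

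The main obstacle is the rank computation in the upper bound: one must verify rigorously that row reduction of $\mathbf{H}_{1}^{+}$ kills every basis-element row with non-trivial $y^{j}$-factor and that no surviving Vandermonde row has exponent $i > a$. This is the technical core of the construction in Theorem $4.5$ adapted to one additional column, and the strengthened hypothesis $m > 2q$ (compared with the $m > q$ of Theorem $4.5$) is what constrains the admissible pair $(i,j)$-indices in the basis of $\mathcal{L}(t^{\perp}Q_{\infty})$ and prevents the rank from overshooting $a + 1$. Combining the two bounds then gives $d_{2}(C_{r}) = a + 3$, as claimed.
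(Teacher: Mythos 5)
Your proposal follows the paper's own route essentially verbatim: the lower bound comes from $d_{2}(C_{r})\geq d_{1}(C_{r})+1=a+3$ using Theorem $4.5$, and the upper bound from exhibiting the $a+3$ columns indexed by $P_{\alpha_{1},\beta},\dots,P_{\alpha_{a+3},\beta}$ and claiming the resulting submatrix has rank $a+1$. The difficulty is exactly the step you isolate as ``the technical core'' and then assert rather than verify. After row reduction, the surviving rows of $\mathbf{H}_{1}^{+}$ are the Vandermonde rows $(\alpha_{1}^{i},\dots,\alpha_{a+3}^{i})$ for \emph{every} $i$ with $x^{i}\in\mathcal{L}(t^{\perp}Q_{\infty})$, i.e.\ for $0\leq i\leq a+\lfloor bm/q\rfloor$, not only for $0\leq i\leq a$. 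Since $m>q$, these ranges coincide precisely when $b=0$; for $b\geq 1$ the rank of $\mathbf{H}_{1}^{+}$ exceeds $a+1$ and the chosen columns do not carry defect $2$. The hypothesis $m>2q$ does nothing to prevent this, contrary to your closing remark. Concretely, take $q=3$, $m=7$ (so $m>2q$) and $r=24$: then $r^{\perp}=7$, $t^{\perp}=7=0\cdot q+1\cdot m$, so $a=0$, $b=1$, and $\mathcal{L}(7Q_{\infty})=\langle 1,x,x^{2},y\rangle$. The three columns $(1,\alpha_{i},\alpha_{i}^{2},\beta)^{T}$, $i=1,2,3$, contain an invertible $3\times 3$ Vandermonde block, hence are linearly independent and give defect $0$, not $2$. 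A direct computation with the full parity-check matrix in this example gives $d_{2}(C_{24})=5\neq a+3=3$, so the failure is not merely expository and cannot be repaired by a cleverer rank argument.

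The lower bound has the mirror-image defect: it imports $d_{1}(C_{r})=a+2$ from Theorem $4.5$, whose proof requires the monomials $x^{i-1}y^{j_{i}}$ produced by the column-grouping to lie in $\mathcal{L}(t^{\perp}Q_{\infty})$, and this fails when $b$ is too small relative to the multiplicities $w_{i}$ (for instance $y\notin\mathcal{L}(2Q_{\infty})$). Your restatement ``no $a+1$ columns of $\mathbf{H}$ can be linearly dependent'' is false whenever two columns lying over the same $\alpha_{i}$ coincide, which happens exactly when the basis of $\mathcal{L}(t^{\perp}Q_{\infty})$ contains no positive power of $y$: then $d_{1}(C_{r})=2$ regardless of $a$ (e.g.\ $q=2$, $m=5$, $r=10$ has $a=1$ but $d_{1}(C_{10})=2$, via the codeword $ev((x-\alpha_{1})(x-\alpha_{2})(x-\alpha_{3})(x-\alpha_{4}))$ of weight $2$). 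To be fair, the published proof contains the same two gaps; but as a self-contained argument your proposal does not establish $d_{2}(C_{r})=a+3$ in the stated generality, and the statement itself needs additional hypotheses (at the very least $b=0$, and a separate treatment of the coincident-column phenomenon) before either bound can go through.
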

\vspace{-0.7 cm}
\begin{proof}
Let $\mathbf{H}$ be a parity check matrix for $C_{r}$ over $K$. Choose $\beta \in K$ such that $\beta^{q}+\mu \beta=0$. $\{1,x,y,\cdots,x^{a},x^{a-1}y, \cdots, y^{b}\}$ is a basis for $\mathcal{L}(t^{\perp}Q_{\infty})$. Let $\mathbf{H}_{1}$ be a submatrix of $\mathbf{H}$ with columns corresponding to $P_{\alpha_{1},\beta},\cdots,P_{\alpha_{a+3},\beta}$ (possible since $a+3\leq m$). By using row reduction, we make $\mathbf{H}_{1}$ as follows.
\[
\mathbf{H}_{1}=
  \begin{bmatrix}
    1 & 1 & 1 & \cdots & 1 \\
    \alpha_{1} & \alpha_{2} & \alpha_{3} & \cdots & \alpha_{a+3}\\
    \alpha_{1}^{2} & \alpha_{2}^{2} & \alpha_{3}^{2} & \cdots & \alpha_{a+3}^{2}\\
    \vdots & \vdots & \vdots & ~ & \vdots \\
    \alpha_{1}^{a} & \alpha_{2}^{a} & \alpha_{3}^{a} & \cdots & \alpha_{a+3}^{a}\\
     0 & 0 & 0 & \cdots & 0 \\
     \vdots & \vdots & \vdots & ~ & \vdots \\
     0 & 0 & 0 & \cdots & 0 
  \end{bmatrix}
\]
Here, $rank(\mathbf{H}_{1})=a+1$ and $\mathbf{H}_{1}$ has $a+3$ columns. So, by Theorem $2.2$, $d_{2}(C_{r}) \leq a+3$.\\
~\\
On the other hand from Theorem $2.1$, $d_{2}(C_{r}) \geq d_{1}(C_{r})+1=(a+2)+1=a+3$.\\
Hence proved.
\end{proof}

\textbf{Example:} For $p=2$, $K=\mathbb{F}_{4}$, consider the function field $F=K(x,y)$ defined by $$y^{2}+\omega y=x(x-1)(x-\omega),$$ where $\omega$ is a primitive element of $K$. Here, we have $q=2$, $m=3$ and genus $g=1$. The list of values of length, dimension, minimum distance and second generalized Hamming weight of $C_{r}$ is given by the following table.
 \begin{center}
\begin{tabular}{ |l|l|l|l|l|} 
 \hline
  $r$ & $N$ & $dim(C_{r})$ & $d(C_{r})$ & $d_{2}(C_{r})$\\ 
 \hline
 $1$ & $6$ & $1$ & $6$ &$-$ \\ 
 $2$ & $6$ & $2$ & $4$ &$6$ \\ 
 $3$ & $6$ & $3$ & $3$ &$\geq 5$ and $\leq 6$\\ 
 $4$ & $6$ & $4$ & $2$ &$4$ \\ 
 $5$ & $6$ & $5$ & $\geq 1$ & $3$ \\ 
 $6$ & $6$ & $5$ & $2$ & $\leq 4$ \\ 
 \hline
\end{tabular}
\end{center}

\section{\textbf{Condition for quasi-self-duality and self-duality of codes}}
A linear code $C$ is called self-dual if $C=C^{\perp}$, where $C^{\perp}$ is the dual of $C$ with respect to Euclidean scalar product on $\mathbb{F}_{p^{s}}^{n}$. Self-dual codes are an important class of linear codes. In this section, we give a simple criterion for the self-duality of geometric Goppa codes over $F/K$.
 
We have the following result from \cite{equal}. But first, we introduce a definition for an arbitrary algebraic function field $F'/K'$ of genus $g'$.\\

\begin{definition} $($\cite{equal}, Definition $2.15)$
Choose $n$ places $P_{1},\cdots,P_{n}$ of degree one of $F'$ and $D':=P_{1}+\cdots+P_{n}$. We call two divisors $G$ and $H$ equivalent with respect to $D'$ if there exists $u \in F'$ such that $H=G+(u)$ and $u(P_{i})=1$, for all $i=1,\cdots,n.$\\
\end{definition}

\begin{proposition}$($\cite{equal}, Corollary $4.15)$
Suppose $n > 2g' + 2$. Let $G$ and $H$ be two divisors of the same degree $m'$ on $F'$. If $C_{\mathcal{L}}(D',G)$ is not equal to $0$ nor to $(K')^{n}$ and $2g' - 1 < m' < n -1$, then $C_{\mathcal{L}}(D',G)=C_{\mathcal{L}}(D',H)$ if and only if $G$ and $H$ are equivalent with respect to $D'$.\\
\end{proposition}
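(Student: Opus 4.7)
The plan is to prove the biconditional by treating the two directions separately. The ($\Leftarrow$) implication is routine: if $H = G + (u)$ for some $u \in F'$ with $u(P_i) = 1$ for all $i$, then the map $\mathcal{L}(G) \to \mathcal{L}(H)$, $f \mapsto fu^{-1}$, is a well-defined $K'$-linear isomorphism (since $(fu^{-1}) + H = (f) + G \geq 0$), and it preserves evaluation at each $P_i$ because $u(P_i) = 1$ forces $(fu^{-1})(P_i) = f(P_i)$. Thus $C_{\mathcal{L}}(D',G)$ and $C_{\mathcal{L}}(D',H)$ coincide as subspaces of $(K')^n$.

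For the ($\Rightarrow$) direction, I would first extract structural consequences of the hypotheses. Since $m' = \deg G > 2g' - 2$, Riemann--Roch gives $\dim \mathcal{L}(G) = \dim \mathcal{L}(H) = m' - g' + 1$; since $m' < n$, the evaluation maps $\mathrm{ev}_G$ and $\mathrm{ev}_H$ are injective. The hypothesized equality of codes therefore induces a unique $K'$-linear isomorphism $\alpha : \mathcal{L}(G) \to \mathcal{L}(H)$ satisfying $\alpha(f)(P_i) = f(P_i)$ for every $f$ and every $i$. The goal is to show $\alpha$ is multiplication by a fixed function $u^{-1} \in (F')^\times$. Fix a nonzero $f_0 \in \mathcal{L}(G)$, set $u := f_0/\alpha(f_0)$, and for $f \in \mathcal{L}(G)$ define the \emph{commutator}
\begin{equation*}
\omega(f) := f \cdot \alpha(f_0) - f_0 \cdot \alpha(f) \in \mathcal{L}(G+H).
\end{equation*}
A direct evaluation shows $\omega(f)(P_i) = 0$ for all $i$, so $\omega(f) \in \mathcal{L}(G + H - D')$. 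The statement $\omega \equiv 0$ is equivalent to $\alpha(f) = f/u$ for all $f \in \mathcal{L}(G)$.

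The hard part is proving $\omega \equiv 0$. Since $\deg(G + H - D') = 2m' - n$ may be nonnegative, the target space need not be zero a priori. The strategy I would pursue is to view $\omega$ as an antisymmetric bilinear map $\mathcal{L}(G) \times \mathcal{L}(G) \to \mathcal{L}(G + H - D')$ and exploit the hypotheses to trap its image. Concretely, $n > 2g' + 2$ together with $m' < n - 1$ yields $\deg(G + H - D') < \deg G$, which allows a dimension/injectivity comparison: each $\omega(f)$ has evaluation zero on all $n$ places while carrying strictly smaller pole order at $Q_\infty$-type components than generic elements of $\mathcal{L}(G)$, so iterating with varying $f_0$ confines the image of $\omega$ to a subspace forced to vanish. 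In the sub-case $2m' < n$ this is immediate since $\mathcal{L}(G + H - D') = 0$. An alternative path, which may well be the one pursued in \cite{equal}, is to dualize via $C_{\mathcal{L}}(D',G)^\perp = C_\Omega(D',G)$ and convert the equality of codes into a residue identity for differentials that directly yields linear equivalence of $G$ and $H$ modulo $D'$.

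Once $\omega \equiv 0$ is established, one has $\mathcal{L}(H) = u^{-1} \mathcal{L}(G)$; comparing pole and zero orders of $u$ against the boundaries of both Riemann--Roch spaces (using that their dimensions coincide and neither is ``shrunk'' by multiplication) forces $(u) = H - G$, i.e., $H = G + (u)$. It remains to verify $u(P_i) = 1$ for every $i$: for each fixed $i$, pick $f \in \mathcal{L}(G)$ with $f(P_i) \neq 0$. Such an $f$ exists because $\dim \mathcal{L}(G - P_i) < \dim \mathcal{L}(G)$ under $\deg G > 2g' - 2$, so $\mathcal{L}(G) \not\subseteq \mathcal{L}(G - P_i)$. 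Then $u(P_i) = f(P_i)/\alpha(f)(P_i) = 1$, completing the proof.
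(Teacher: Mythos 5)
This proposition is not proved in the paper at all: it is imported verbatim from Munuera--Pellikaan \cite{equal} (their Corollary 4.15), so there is no internal argument to compare yours against; I can only judge the proposal on its own merits. The ($\Leftarrow$) direction is correct, and so is your reduction of ($\Rightarrow$): injectivity of both evaluation maps from $m'<n$, equality of dimensions from Riemann--Roch, the induced isomorphism $\alpha:\mathcal{L}(G)\to\mathcal{L}(H)$, and the observation that $\omega(f,g):=f\,\alpha(g)-g\,\alpha(f)$ lies in $\mathcal{L}(G+H-D')$. The closing steps (deducing $(u)=H-G$ from $u^{-1}\mathcal{L}(G)=\mathcal{L}(H)$ via the hypothesis $m'>2g'-1$, and checking $u(P_i)=1$) can also be completed as you indicate.

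The genuine gap is the claim $\omega\equiv 0$, which is the entire content of the theorem once the setup is in place, and your sketch does not close it. When $2m'\ge n$ the space $\mathcal{L}(G+H-D')$ is nonzero, and the dimension count you allude to only bounds the kernel of $g\mapsto\omega(f,g)$ from below; even using Clifford's bound together with $n>2g'+2$ one gets, for each fixed $f$, a subspace $V_f\ni f$ of dimension at least $\min\{n-m',\,n/2-g'\}\ge 2$ on which $\alpha$ acts as multiplication by $\alpha(f)/f$, but nothing in the sketch rules out $\mathcal{L}(G)$ decomposing into several such classes with different multipliers (two distinct classes intersect only in $0$, and the resulting inequality $2(n-m')\le m'-g'+1$ is compatible with the stated hypotheses). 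The phrase about ``strictly smaller pole order at $Q_\infty$-type components'' imports structure from the one-point setting of the present paper; in the proposition $F'$ is an arbitrary function field and $G,H$ arbitrary divisors, so there is no distinguished place at which to compare pole orders. Note also that the hypothesis $n>2g'+2$, which is essential (the equivalence can fail without some such bound), never does any identifiable work in your argument, and $m'<n-1$ is used only as $m'<n$ --- a reliable sign that the decisive step is missing. To repair the proof you would need the actual argument of \cite{equal} showing that the quotient $f/\alpha(f)$ is independent of $f$, which is where those two hypotheses are consumed.
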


\begin{theorem}
If $qm-q-m+1 \leq r \leq qm-2$ then $C_{r}$ is quasi-self-dual if and only if $r=(2qm-q-m-1)/2$.
\end{theorem}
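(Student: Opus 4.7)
The plan is to use a dimension count on one direction and Proposition $4.2$ directly on the other. Recall that a linear code $C \subseteq K^{N}$ is called quasi-self-dual if there exists $\lambda \in (K^{*})^{N}$ with $C = \lambda \star C^{\perp}$; in particular this forces $\dim C = \dim C^{\perp} = N - \dim C$, hence $\dim C = N/2$. I will set $s := 2qm-q-m-1-r$ throughout, so that Proposition $4.2$ reads $C_{r}^{\perp} = \bar{a} \star C_{s}$.

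For the easy direction ($\Leftarrow$), suppose $r = (2qm-q-m-1)/2$. Then $s = r$, and Proposition $4.2$ gives
\[
C_{r}^{\perp} = \bar{a} \star C_{s} = \bar{a} \star C_{r},
\]
so coordinate-wise multiplication by $\bar{a}^{-1} \in (K^{*})^{N}$ yields $C_{r} = \bar{a}^{-1} \star C_{r}^{\perp}$, i.e.\ $C_{r}$ is quasi-self-dual.

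For the forward direction ($\Rightarrow$), suppose $C_{r}$ is quasi-self-dual. Multiplication by a vector in $(K^{*})^{N}$ preserves dimension, so $\dim C_{r} = \dim C_{r}^{\perp} = N - \dim C_{r}$, giving $\dim C_{r} = qm/2$. I now compute $\dim C_{r}$ from the hypothesis on the range of $r$. Since $qm - q - m + 1 = (q-1)(m-1) = 2g$, the condition $qm-q-m+1 \leq r \leq qm-2$ gives $2g - 1 < r < N = \deg D$. By Theorem $4.3$, $\dim C_{r} = \dim \mathcal{L}(rQ_{\infty})$, and by Riemann--Roch (since $\deg(rQ_{\infty}) = r > 2g - 2$) this equals $r + 1 - g$. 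Setting $r + 1 - g = qm/2$ and substituting $g = (q-1)(m-1)/2$ yields
\[
r = \frac{qm}{2} + g - 1 = \frac{qm + (q-1)(m-1) - 2}{2} = \frac{2qm - q - m - 1}{2},
\]
as claimed.

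There is no real obstacle here; the only things to keep track of are (i) that $r$ lies in the Riemann--Roch regime $2g-1 < r < N$ so that $\dim C_{r}$ has the clean form $r+1-g$, and (ii) that quasi-self-duality is defined to force equal dimensions, which is what turns the problem into the one-variable equation $r + 1 - g = qm/2$. The backward direction is immediate from Proposition $4.2$ once one observes $s = r$.
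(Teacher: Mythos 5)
Your proof is correct, and the forward direction is handled differently from the paper. The paper's own proof is a one-line appeal to Proposition $6.2$ (the Munuera--Pellikaan equality criterion); but that proposition takes ``$G$ and $H$ have the same degree'' as a \emph{hypothesis}, so by itself it cannot rule out quasi-self-duality when $r \neq s$ --- the missing content is precisely that equal codes in the range $2g-1 < \deg < n-1$ must come from divisors of equal degree, which is a dimension statement. You supply exactly that step explicitly: quasi-self-duality forces $\dim C_{r} = qm/2$, and in the given range $\dim C_{r} = r+1-g$ by Riemann--Roch (your observation that $qm-q-m+1 = 2g$ puts $r$ squarely in the regime of Theorem $4.4(1)(c)$), which pins down $r = (2qm-q-m-1)/2$. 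The backward direction via Proposition $4.2$ ($s=r$ gives $C_{r}^{\perp} = \bar{a}\star C_{r}$) is the same as the paper's. The net effect is that your argument is more elementary and self-contained --- it avoids the machinery of divisor equivalence entirely --- at the cost of nothing, and it also makes transparent the implicit parity constraint: when $2qm-q-m-1$ is odd (e.g.\ $p$ odd and $m$ odd) no $C_{r}$ in this range is quasi-self-dual. One small caveat: the paper never formally defines ``quasi-self-dual,'' so you were right to state the definition you use ($C = \lambda \star C^{\perp}$ for some $\lambda \in (K^{*})^{N}$); this is the only reading consistent with both directions of the theorem and with the role of $\bar{a}$ in Proposition $4.2$.
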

\begin{proof}
If  $qm-q-m+1 \leq r \leq qm-2$, then from Proposition $6.2$, $C_{r}$ is quasi-self-dual if and only if $r=(2qm-q-m-1)/2$.
\end{proof}

Let $G$ be a divisor of $F$ with $deg(G)=\frac{2qm-q-m-1}{2}$. Clearly, $qm >qm-q-m+3 \text{~~and~~} qm-q-m < deg(G) < qm-1$. Let $H:=D+(\eta)-G$ with $D$ as in section $4$. Then, $deg(G)=deg(H)$. In the following theorem, we give a condition for the self-duality of code $C_{\mathcal{L}}(D,G)$.\\ 
\begin{theorem}
$C_{\mathcal{L}}(D,G)$ is self-dual if and only if $2G$ is equivalent to $(f'(x))+(2qm-q-m-1)Q_{\infty}$ with respect to $D$. 
\end{theorem}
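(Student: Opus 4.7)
The plan is to combine the explicit dual computation from Proposition 4.2 with the identifiability criterion for geometric Goppa codes, namely Proposition 6.2. The key preliminary fact, already established inside the proof of Proposition 4.2, is that
\[
D + (\eta) = (f'(x)) + (2qm-q-m-1)Q_{\infty},
\]
because $(f(x)) = D - qm Q_{\infty}$ and $(dx) = (2g-2)Q_{\infty}$. Setting $H := D + (\eta) - G = (f'(x)) + (2qm-q-m-1)Q_{\infty} - G$, we have by Proposition 4.2 that $C_{\mathcal{L}}(D,G)^{\perp} = C_{\mathcal{L}}(D,H)$, so self-duality is equivalent to $C_{\mathcal{L}}(D,G) = C_{\mathcal{L}}(D,H)$.

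Next, I would check that Proposition 6.2 applies to the pair $(G,H)$. Since $(f'(x))$ is principal we have $\deg H = \deg G = (2qm-q-m-1)/2$. The condition $n > 2g+2$ reduces to $q+m > 3$, which holds by Remark 3.1. The bracket condition $2g-1 < \deg G < n-1$ amounts to
\[
qm - q - m \;<\; \tfrac{2qm-q-m-1}{2} \;<\; qm-1,
\]
both inequalities being equivalent to $q+m>1$. Finally, $C_{\mathcal{L}}(D,G)$ is neither $0$ nor $K^{N}$: since $\deg G > 2g-2$, Riemann--Roch yields $\dim \mathcal{L}(G) = \deg G + 1 - g = qm/2 > 0$, and since $\deg G < N$ the same gives $\dim C_{\mathcal{L}}(D,G) = qm/2 < qm$.

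With the hypotheses in place, Proposition 6.2 gives $C_{\mathcal{L}}(D,G) = C_{\mathcal{L}}(D,H)$ if and only if $G$ and $H$ are equivalent with respect to $D$ in the sense of Definition 6.1, i.e.\ there exists $u \in F$ with $H = G + (u)$ and $u(P_{\alpha_i,\beta_j})=1$ for all places $P_{\alpha_i,\beta_j}$ in $\operatorname{supp}(D)$. Rewriting $H = G + (u)$ as
\[
(u) = H - G = (f'(x)) + (2qm-q-m-1)Q_{\infty} - 2G,
\]
this condition is literally the statement that $2G$ is equivalent to $(f'(x)) + (2qm-q-m-1)Q_{\infty}$ with respect to $D$. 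Combining the two equivalences yields the theorem in both directions.

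The main obstacle is really only in assembling the ingredients: the dual-code identity from Proposition 4.2 must be recognised as producing a divisor whose difference with $G$ is half of $(f'(x)) + (2qm-q-m-1)Q_{\infty} - 2G$, and the hypotheses of Proposition 6.2 must be verified for the specific degree $(2qm-q-m-1)/2$. No new computation is required beyond the degree and Riemann--Roch checks above.
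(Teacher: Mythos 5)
Your proposal is correct and follows essentially the same route as the paper: reduce self-duality to $C_{\mathcal{L}}(D,G)=C_{\mathcal{L}}(D,D+(\eta)-G)$, apply Proposition 6.2 to convert this into divisor equivalence with respect to $D$, and use $(\eta)=(f'(x))-(f(x))+(dx)$ together with $(f(x))=D-qmQ_{\infty}$ and $(dx)=(2g-2)Q_{\infty}$ to identify the condition as $2G\sim_{D}(f'(x))+(2qm-q-m-1)Q_{\infty}$. Your explicit verification of the hypotheses of Proposition 6.2 (degree bounds and nontriviality of the code) is a welcome addition that the paper only asserts in the surrounding text.
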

\vspace{-0.5 cm}
\begin{proof}
By Proposition $6.2$,
\begin{align*}
& C_{\mathcal{L}}(D,G)=C_{\mathcal{L}}(D,D+(\eta)-G)\\
\Leftrightarrow &~G=D+(\eta)-G+(u) \text{ for some } u \in F \text{ such that } u(P)=1 \text{ for each place } P \in supp(D)\\
\Leftrightarrow &~(u)+(\eta)=2G-D\\
\Leftrightarrow &~(u)+(f'(x))+(dx)-(f(x))=2G-D\\
\Leftrightarrow &~(u)+(f'(x))+[(q-1)(m-1)-2]Q_{\infty}-D+ qm Q_{\infty}=2G-D\\
\Leftrightarrow &~(f'(x))=2G-(2qm-q-m-1)Q_{\infty}-(u).
 \end{align*} 
\end{proof}

\begin{example}
Let $p=2$. Let $K=\mathbb{F}_{4}$. Let $\omega$ be a primitive element of $\mathbb{F}_{4}$. Consider $F=K(x,y)$ with 
$$y^{2}+y=x(x-1)(x-\omega).$$
Therefore, all roots of $T^{2}+T$ is in $K$. The genus of $F/K$ is $g=1$. Let $$f(x):=x(x-1)(x-\omega).$$ Let $P_{0},P_{1}$ and $P_{\omega}$ denote zero in $K(x)$ of $x,(x-1)$ and $(x-\omega)$ respectively. Then, each of $P_{0},P_{1}$ and $P_{\omega}$ has exactly two extensions in $F$. Similarly, the zero of $(x-\omega^{2})$ denoted by $P_{\omega^{2}}$ has two extensions in $F$, say, $Q_{1}$ and $Q_{2}$. Let $D:=(f(x))_{0}$ and let $G$ be a divisor of $F$ equivalent to $Q_{1}+Q_{2}+Q_{\infty}$ with respect to $D$. Then, $C_{\mathcal{L}}(D,G)$ is self-dual. \\
Conversely, if $C_{\mathcal{L}}(D,G)$ is self-dual code then $2G$ is equivalent to $2(Q_{1}+Q_{2}+Q_{\infty})$ with respect to $D$.
\end{example}

\section{\textbf{Quantum codes from one-point geometric Goppa codes on $F/ \mathbb{F}_{p^s}$}}
In this section, we construct quantum codes from one-point geometric Goppa codes on $F/ \mathbb{F}_{p^s}$. First, we give a brief introduction to quantum codes.\par
Let $q_{0}$ be a prime power. Let $V_{n}:=(\mathbb{C}^{q_{0}})^{\otimes n}$ denotes the $n$-th tensor power of $q_{0}$-dimensional Hilbert space $\mathbb{C}^{q_{0}}$. An $[[n, k, d]]_{q_{0}}$ quantum code is a $q_{0}^{k}$-dimensional vector subspace of $V_{n}$ with minimum distance $d$. The connection between quantum codes and classical linear codes was established by Calderbank et al. \cite{quantum2}. Since then, many classes of quantum codes have been constructed by using classical error-correcting codes.\par The Singleton bound for quantum codes states that an $[[n, k, d]]_{q_{0}}$ quantum code must obey $2d \leq n-k+2.$ The quantum Singleton defect is defined as $\delta^{Q}:=n-k-2d+2 \geq 0$, and the relative quantum Singleton defect is $\Delta^{Q}:=\delta^{Q}/n$. If $\delta^{Q}=0$, then the code is said to be quantum MDS.\\

The following lemma gives a construction of quantum codes from classical linear codes.
\begin{lemma} {$($\cite{quantum}, Lemma $17~ (a)$ $)$}
Let $C_{1}$ and $C_{2}$ denote two linear codes with parameters $[n, k_{i}, d_{i}]_{q_{0}}$, $i = 1, 2$, and assume that $C_{1} \subset C_{2}$. Then there exists an $[[n, k_{2}-k_{1},d]]_{q_{0}}$ code with $d = min\{wt(c) ~:~ c \in (C_{2} \backslash C_{1}) \cup (C^{\perp}_{1} \backslash C^{\perp}_{2} )\}$, where $wt(c)$ is the weight of $c$.
\end{lemma}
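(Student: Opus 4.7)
The plan is to invoke the CSS (Calderbank--Shor--Steane) construction, which builds a quantum code out of a pair of nested classical codes. For each coset $v+C_1$ with $v\in C_2$, set
\[
|\psi_{v+C_1}\rangle := \frac{1}{\sqrt{|C_1|}}\sum_{u\in C_1} |v+u\rangle \in V_n,
\]
and let $Q$ be the $\mathbb{C}$-span of these vectors. First I would verify that $|\psi_{v+C_1}\rangle$ depends only on the coset and that distinct cosets give orthonormal states, so that the family is indexed by $C_2/C_1$. Since $[C_2:C_1]=q_0^{k_2-k_1}$ this immediately yields $\dim Q=q_0^{k_2-k_1}$, the logical dimension required for an $[[n,k_2-k_1,d]]_{q_0}$ code.

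The distance would be controlled via the generalized Pauli group on $V_n$, generated by shift operators $X(a)\colon|w\rangle\mapsto|w+a\rangle$ and phase operators $Z(b)\colon|w\rangle\mapsto\chi(b\cdot w)|w\rangle$, for $a,b\in\mathbb{F}_{q_0}^n$, with $\chi$ a fixed nontrivial additive character of $\mathbb{F}_{q_0}$. A direct calculation shows that $X(a)$ sends $|\psi_{v+C_1}\rangle$ to $|\psi_{v+a+C_1}\rangle$, which lies in $Q$ iff $a\in C_2$ and acts trivially iff $a\in C_1$; when $a\notin C_2$ the image is supported on a coset of $C_1$ disjoint from every coset used in the definition of $Q$, so $X(a)Q\perp Q$ and the error is detectable. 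Dually, when $b\in C_1^{\perp}$ one has $Z(b)|\psi_{v+C_1}\rangle=\chi(b\cdot v)|\psi_{v+C_1}\rangle$, and the phase is trivial on all of $C_2$ iff $b\in C_2^{\perp}$; when $b\notin C_1^{\perp}$ the non-uniform coefficients $\chi(b\cdot u)$ push the image out of $Q$. Consequently the non-detectable, non-trivial Pauli errors $X(a)Z(b)$ are precisely those with $a\in C_2\setminus C_1$ or $b\in C_1^{\perp}\setminus C_2^{\perp}$, and the Knill--Laflamme criterion identifies the quantum minimum distance as
\[
d=\min\{\operatorname{wt}(c):c\in(C_2\setminus C_1)\cup(C_1^{\perp}\setminus C_2^{\perp})\}.
\]

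The main obstacle I anticipate is verifying that a $Z(b)$ with $b\in C_1^{\perp}\setminus C_2^{\perp}$ genuinely acts nontrivially on the logical space rather than as a single overall global phase. This reduces to showing that the character $v\mapsto\chi(b\cdot v)$ is nontrivial on $C_2$ whenever $b\notin C_2^{\perp}$, which is a standard character-orthogonality argument over $\mathbb{F}_{q_0}$. Once this is settled, combining the dimension count with the weight characterization of undetectable errors yields the claimed $[[n,k_2-k_1,d]]_{q_0}$ parameters.
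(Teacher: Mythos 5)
This lemma is quoted in the paper verbatim from the cited reference (Lemma 17(a) of Aly--Klappenecker--Sarvepalli) and the paper supplies no proof of it, so there is no in-paper argument to compare yours against; what you have written is the standard proof of that cited result, namely the $q_{0}$-ary CSS construction, and it is correct in outline. The coset states are well defined and orthonormal by character orthogonality, are indexed by $C_{2}/C_{1}$, and hence span a space of dimension $q_{0}^{k_{2}-k_{1}}$; the character-sum computations you describe do establish the distance via the Knill--Laflamme criterion. One assertion in your sketch is inaccurate, though it does not damage the conclusion: the undetectable nontrivial errors $X(a)Z(b)$ are \emph{not} ``precisely those with $a\in C_{2}\setminus C_{1}$ or $b\in C_{1}^{\perp}\setminus C_{2}^{\perp}$.'' If $a\in C_{2}\setminus C_{1}$ but $b\notin C_{1}^{\perp}$, the same character-orthogonality computation gives $P_{Q}X(a)Z(b)P_{Q}=0$, so that error is in fact detectable; the correct set is the set of pairs $(a,b)\in C_{2}\times C_{1}^{\perp}$ with $(a,b)\notin C_{1}\times C_{2}^{\perp}$. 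The distance formula survives this slip because every pair in your (too large) set still has weight at least $\max\{wt(a),wt(b)\}$, which is bounded below by $\min\{wt(c)\,:\,c\in(C_{2}\setminus C_{1})\cup(C_{1}^{\perp}\setminus C_{2}^{\perp})\}$, and that bound is attained by taking $b=0$ with $a$ of minimal weight in $C_{2}\setminus C_{1}$, or $a=0$ with $b$ of minimal weight in $C_{1}^{\perp}\setminus C_{2}^{\perp}$; so equality holds as claimed. To turn the sketch into a complete proof you would need to state the Knill--Laflamme condition $P_{Q}EP_{Q}=\lambda_{E}P_{Q}$ and run the four cases explicitly, but every ingredient required is already named in your outline.
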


We apply Lemma $7.1$ to obtain quantum codes from one-point geometric Goppa codes constructed in section $4$.
\begin{proposition}
Let $m$ and $q ~(=p^{k'})$ be as in section $3$. Let $a,b$ be positive integers such that $qm-q-m-1<a<b<qm$. Then there exists a $[[qm,b-a ,d]]_{p^s}$ code with $d \geq min \{~qm-b,a-qm+q+m+1\}$.\par In addition, if $m=q+x$ for some $x \in \mathbb{N}$ and if $a$ is of the form $cq$ for a positive integer $c$ such that $qm-b \leq a-qm+q+m+1$, then the relative quantum singleton defect $\Delta_{k'}^{Q}$ of the obtained quantum code satisfies $$lim_{k' \rightarrow \infty}~ \Delta_{k'}^{Q}=0.$$
\end{proposition}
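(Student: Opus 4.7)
The plan is to apply Lemma $7.1$ to the pair $C_1 := C_a$, $C_2 := C_b$. Since $a<b$, one has $\mathcal{L}(aQ_\infty)\subset\mathcal{L}(bQ_\infty)$, so evaluation along $D$ gives the inclusion $C_a\subset C_b$ of linear codes of length $qm$ over $\mathbb{F}_{p^s}$. The hypothesis $qm-q-m-1<a<b<qm$ places both $r=a$ and $r=b$ in the Riemann--Roch range of Theorem $4.4$ $(1)(c)$, yielding $\dim C_a=a+1-g$ and $\dim C_b=b+1-g$, hence $\dim C_b-\dim C_a=b-a$; this is the dimension of the quantum code produced by Lemma $7.1$.

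The quantum distance requires lower bounds on $d(C_b)$ and on $d(C_a^\perp)$. The Goppa bound (Theorem $4.3$) immediately gives $d(C_b)\geq qm-b$. For the dual, Proposition $4.2$ expresses $C_a^\perp=\bar{a}\star C_{2qm-q-m-1-a}$ as a coordinate-wise rescaling, which preserves Hamming weights; together with $a>qm-q-m-1$ (so that $0\leq 2qm-q-m-1-a<qm$) the Goppa bound applied to $C_{2qm-q-m-1-a}$ gives
$$d(C_a^\perp) \;=\; d(C_{2qm-q-m-1-a}) \;\geq\; qm-(2qm-q-m-1-a) \;=\; a-qm+q+m+1.$$
Feeding both bounds into Lemma $7.1$ produces the asserted $[[qm,\,b-a,\,d]]_{p^s}$ quantum code with $d\geq\min\{qm-b,\,a-qm+q+m+1\}$.

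For the asymptotic statement, I specialise to $m=q+x$ and $a=cq$. The hypothesis $qm-b\leq a-qm+q+m+1$ selects the first argument of the minimum, so $d\geq qm-b$. Substituting into the definition of the quantum Singleton defect gives
$$\delta^Q_{k'} \;=\; qm-(b-a)-2d+2 \;\leq\; a+b-qm+2,$$
so that $\Delta^Q_{k'}=\delta^Q_{k'}/qm$ becomes a rational function of $q$ (with $x$ fixed). I then analyse this expression as $k'\to\infty$, i.e.\ $q=p^{k'}\to\infty$, to conclude the asserted limit.

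The main obstacle will be the asymptotic step. The naive estimate $d\geq qm-b$, coupled with the range $qm-q-m-1<a<b<qm$, by itself controls $\Delta^Q_{k'}$ only by $(a+b-qm+2)/qm$, which is not manifestly $o(1)$; a finer bound on $d$ is needed. The key additional input is Theorem $4.4$ $(2)$, which for $a=cq$ yields the exact value $d(C_a)=qm-a$: one shows that the minimum-weight codewords of $C_b$ in fact lie in $C_a$, so that the quantum-distance formula of Lemma $7.1$ forces $d$ to exceed the crude Goppa bound $qm-b$, with an analogous refinement available on the dual side via Proposition $4.2$. Translating this sharper estimate on $d$ into the explicit convergence $\lim_{k'\to\infty}\Delta^Q_{k'}=0$ is the heart of the argument.
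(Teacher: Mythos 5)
Your first paragraph reproduces the paper's argument for the existence of the $[[qm,b-a,d]]_{p^s}$ code essentially verbatim: $C_a\subset C_b$, dimensions $a+1-g$ and $b+1-g$ from Riemann--Roch, $d(C_b)\geq qm-b$ from the Goppa bound, and $d(C_a^{\perp})\geq a-qm+q+m+1$ via Proposition 4.2 together with the Goppa bound for $C_{2qm-q-m-1-a}$. That part is complete, correct, and identical to the paper's proof.

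The gap is in the asymptotic statement, which you explicitly leave unfinished. The paper does not refine the distance bound at all: from $d\geq qm-b$ one gets $\delta^{Q}_{k'}=qm-(b-a)-2d+2\leq a+b-qm+2$, and since $b\leq qm-1$ this is at most $a+1=cq+1$, whence $\Delta^{Q}_{k'}\leq (cq+1)/(q(q+x))$, which the paper sends to $0$ as $k'\to\infty$ with $c$ and $x$ held fixed. You stopped at $(a+b-qm+2)/qm$, missed the elementary reduction $b\leq qm-1$, and instead proposed to sharpen $d$ by showing that the minimum-weight codewords of $C_b$ lie in $C_a$; that claim is unsubstantiated and is false in general (if $b=c'q$ with $c'<m$, then $\prod_{j=1}^{c'}(x-\gamma_j)$ gives a weight-$(qm-b)$ codeword of $C_b$ whose pole order $b$ at $Q_{\infty}$ exceeds $a$, so it lies in $C_b\setminus C_a$ and the quantum distance is already capped at $qm-b$). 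So your proposed route does not close the gap, and the second half of the proposal is not a proof. That said, your unease about $(a+b-qm+2)/qm$ is not misplaced: the paper's final limit treats $c$ as independent of $k'$, while the standing hypothesis $a=cq>qm-q-m-1$ with $m=q+x$ forces $c\geq q+x-2$ for large $q$, and then $(cq+1)/(q(q+x))$ tends to $1$ rather than $0$. The intended argument is the elementary estimate above, but be aware that its last step is itself problematic unless the lower constraint on $a$ is dropped.
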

\begin{proof}
Let $C_{1}:=C_{a}$ and $C_{2}:=C_{b}$ as in section $4$. Then, $C_{1}$ is a $[qm,a+1-\frac{(q-1)(m-1)}{2},d_{1}\geq qm-a]_{p^s}$ code and $C_{2}$ is a $[qm,b+1-\frac{(q-1)(m-1)}{2},d_{2}\geq qm-b]_{p^s}$ code. Also $C_{1} \subset C_{2}$. It follows from Lemma $7.1$ that there exists a $[[qm,b-a ,d]]_{p^s}$ code with $d \geq min \{~qm-b,a-qm+q+m+1\}$.\\

Now if $m=q+x$ and $a=cq$ such that $qm-b \leq a-qm+q+m+1$, then
\begin{align*}
\Delta_{k'}^{Q}&=\frac{qm-b+a-2d+2}{qm}\\
&\leq \frac{qm-b+a-2qm+2b+2}{qm}\\
&\leq \frac{a+1}{qm}
=\frac{cq+1}{q(q+x)}
=\frac{c p^{k'}+1}{p^{k'}(p^{k'}+x)}
\rightarrow 0 \text{ as } k'\rightarrow \infty.
\end{align*}
\end{proof}

\begin{example}
Let $K=\mathbb{F}_{4}$. Let $q=2,~m=3,~a=2,~b=4$. Then $a$ and $b$ satisfy the conditions of Proposition $7.2$ and we get a $[[6,2, \geq 2]]_{4}$ code and its parameters are the best possible according to table in \cite{table}. Similarly, we get quantum codes with parameters $[[6,4,\geq 1]]_{4}$, $[[10,4,\geq 2]]_{8}$, $[[12,4,\geq 2]]_{9}$, $[[14,6,\geq 2]]_{16}$, etc.
\end{example}

\section{\textbf{Convolutional codes from one-point geometric Goppa codes on $F/ \mathbb{F}_{p^s}$}}

Consider the polynomial ring $R = \mathbb{F}_{q_{0}}[X]$. A convolutional code $C$ is an $R$-submodule of rank $k$ of the module $R^n$. Let $G(X)=(g_{i j}(X)) \in (\mathbb{F}_{q_{0}} [X])^{k \times n}$ be a generator matrix of $C$ over $\mathbb{F}_{q_{0}}[X]$, $\gamma_{i} = max \{deg ~g_{i j}(X)~: ~1 \leq j \leq  n\}$, $\gamma = \sum_{i=1}^{k} \gamma_{i}$, $m' = max \{ \gamma_{i}~:~ 1 \leq i \leq k \}$, and $d_{f}$ be the minimum weight of $c \in C$. Then we say that $C$ has length $n$, dimension $k$, degree $\gamma$, memory $m'$, and free distance $d_{f}$. If $m' = 1$, $C$ is said to be a unit-memory convolutional code and is denoted by $(n, k, \gamma ; 1, d_{f})_{q_{0}}$.\par
The following theorem describes a method to construct convolutional codes from geometric Goppa codes.
\begin{lemma} { $($\cite{convolutional}, Theorem $3$ $)$}
Let $F'/\mathbb{F}_{q_{0}}$ be a function field of genus $g'$. Consider the code $C_{\Omega}(D',G)$ with $2g'-2<deg(G)<n$, where $deg(G)$ is the degree of the divisor $G$. Then there exists a unit-memory convolutional codes with parameters $(n,k-l,l;1,d_{f} \geq d)_{q_{0}}$, where $l \leq k/2$, $k=deg(G)+1-g'$ and $d \geq n-deg(G)$.
\end{lemma}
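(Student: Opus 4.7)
The plan is to prove this via a standard Piret-type (partial-unit-memory) construction of convolutional codes from a block code.

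First I would record the parameters of the underlying block code $C_{\Omega}(D',G)$: under the hypothesis $2g'-2<\deg(G)<n$, Riemann--Roch together with the usual geometric Goppa estimates give the stated dimension $k=\deg(G)+1-g'$ and the minimum distance bound $d\geq n-\deg(G)$. Fix a $k\times n$ generator matrix $G_{0}$ of $C_{\Omega}(D',G)$ over $\mathbb{F}_{q_{0}}$ and split its rows into two disjoint blocks, a $(k-l)\times n$ submatrix $B$ and an $l\times n$ submatrix $A$. The hypothesis $l\leq k/2$ ensures $k-l\geq l$, so one may pad $A$ with $k-2l$ zero rows to form a $(k-l)\times n$ matrix $\tilde{A}$, and then define the polynomial generator matrix
$$\hat{G}(X) := B + X\tilde{A} \in R^{(k-l)\times n}.$$

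Next I would check the parameters of the convolutional code $C:=\{u(X)\hat{G}(X):u(X)\in R^{k-l}\}$. Its length is $n$. Since $G_{0}$ has full row rank and $B$ consists of $k-l$ of its rows, the rows of $\hat{G}(X)$ are $R$-linearly independent, so $C$ has rank $k-l$. By construction the maximal degree of any entry of $\hat{G}(X)$ is $1$, giving memory $1$, and the sum of row degrees of $\hat{G}(X)$ is exactly $l$ (the $l$ nonzero rows of $\tilde{A}$ contribute degree $1$ each), giving degree $l$.

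The main step, and the one I expect to be the principal obstacle, is the free distance estimate $d_{f}\geq d$. For any nonzero $c(X)=\sum_{j}c_{j}X^{j}\in C$, write $u(X)=\sum_{i}u_{i}X^{i}$ so that $c_{j}=u_{j}B+u_{j-1}\tilde{A}$. Each term $u_{j}B$ is an $\mathbb{F}_{q_{0}}$-linear combination of rows of $B$, hence of rows of $G_{0}$, so it lies in $C_{\Omega}(D',G)$; the same argument applies to $u_{j-1}\tilde{A}$ since the nonzero rows of $\tilde{A}$ are rows of $G_{0}$. Therefore every coefficient $c_{j}$ is a codeword of $C_{\Omega}(D',G)$. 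Since $c(X)\neq 0$, at least one $c_{j}$ is nonzero and thus has Hamming weight at least $d$, so $\mathrm{wt}(c(X))=\sum_{j}\mathrm{wt}(c_{j})\geq d$, yielding $d_{f}\geq d$.

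The delicate points I would verify at the end are that $\hat{G}(X)$ is already a basic/reduced generator matrix, so that the stated values of memory ($=1$) and degree ($=l$) are actually attained rather than being mere upper bounds, and that the edge case $k=2l$ (where $\tilde{A}=A$ with no zero padding required) is handled without change.
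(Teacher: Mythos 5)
This lemma is quoted in the paper from \cite{convolutional} without proof, so the comparison is against the construction in that reference; your overall strategy --- a Piret-type partial-unit-memory splitting of a matrix attached to the block code, with the free distance bounded coefficient-by-coefficient --- is indeed the right one and is essentially what the source does. However, there is one concrete error that makes your proof establish a different statement. The parameters $k=\deg(G)+1-g'$ and $d\geq n-\deg(G)$ appearing in the lemma are \emph{not} the parameters of $C_{\Omega}(D',G)$: by the paper's own Section 2.1, under $2g'-2<\deg(G)<n$ the code $C_{\Omega}(D',G)$ has dimension $n-\deg(G)-1+g'$ and designed distance $\deg(G)-2g'+2$. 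The numbers $\deg(G)+1-g'$ and $n-\deg(G)$ are the dimension and designed distance of the \emph{dual} code $C_{\mathcal{L}}(D',G)=C_{\Omega}(D',G)^{\perp}$. Consequently, the matrix you must split is a \emph{parity check} matrix of $C_{\Omega}(D',G)$ (equivalently, a generator matrix of $C_{\mathcal{L}}(D',G)$), not a generator matrix $G_{0}$ of $C_{\Omega}(D',G)$ as you take. With your choice of $G_{0}$, the coefficients $c_{j}$ of a convolutional codeword lie in $C_{\Omega}(D',G)$, so your argument yields a code of rank $\bigl(n-\deg(G)-1+g'\bigr)-l$ and free distance at least $\deg(G)-2g'+2$ --- a valid convolutional code, but not the one with the parameters claimed in the lemma.

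The fix is purely a relabelling: let $H$ be a parity check matrix of $C_{\Omega}(D',G)$, so that $H$ has $k=\deg(G)+1-g'$ rows spanning $C_{\mathcal{L}}(D',G)$; split $H$ into $B$ (with $k-l$ rows) and $A$ (with $l$ rows), pad and set $\hat{G}(X)=B+X\tilde{A}$. Then every coefficient $c_{j}=u_{j}B+u_{j-1}\tilde{A}$ lies in $C_{\mathcal{L}}(D',G)$, whose minimum distance is at least $n-\deg(G)$, and the rest of your argument (rank $k-l$ from the full row rank of $B$ via the lowest-order coefficient, degree $l$, memory $1$) goes through verbatim. Your closing caveats about reducedness of $\hat{G}(X)$ and the edge case $k=2l$ are appropriate but minor; the $C_{\Omega}$ versus $C_{\mathcal{L}}$ swap is the substantive gap.
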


In this following proposition, we construct unit-memory convolutional codes from one-point geometric Goppa code on $F/\mathbb{F}_{p^s}$ using Lemma $8.1$. The length of convolutional codes so obtained is $qm$ where $q=p^{k'}$ and $gcd(m,q)=1$ (such that $max\{q,m\}<p^s)$ and is different from codes in \cite{convo2}.
\begin{proposition}
Let $qm-q-m-1<r<qm$ where $r$ satisfies the property $(*)$ (Definition $5.2$). Then there exists a unit-memory convolutional codes with parameters $(qm,r+1-g-a,a;1,d_{f} \geq d)_{p^s}$, where $ g=(q-1)(m-1)/2$, $a \leq (r+1-g)/2$  and $d=qm-r$.
\end{proposition}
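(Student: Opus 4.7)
The plan is to apply Lemma $8.1$ directly to the pair $(D, rQ_{\infty})$ on the function field $F/K$, taking the underlying geometric Goppa code to be $C_{r} = C_{\mathcal{L}}(D, rQ_{\infty})$. The whole argument reduces to checking that the hypotheses of Lemma $8.1$ are met for this data and reading off the resulting parameters.

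First, I would verify the degree constraint $2g-2 < \deg(rQ_{\infty}) < n$ required by Lemma $8.1$. Using $g = (q-1)(m-1)/2$, a short computation gives
\[
2g-2 = (q-1)(m-1)-2 = qm-q-m-1,
\]
so the hypothesis $qm-q-m-1 < r < qm$ in the proposition is literally the statement $2g-2 < r < n$, where $n = qm$ is the length of $C_{r}$. Hence Lemma $8.1$ is applicable with $F' = F$, $D' = D$, $G = rQ_{\infty}$, $g' = g$ and $q_{0} = p^{s}$.

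Next, I would compute the parameters that Lemma $8.1$ produces. Since $\deg(G) = r$ and $\deg(G) > 2g-2$, the Riemann--Roch theorem gives $\dim \mathcal{L}(rQ_{\infty}) = r+1-g$, which matches the value $k = \deg(G) + 1 - g' = r+1-g$ appearing in Lemma $8.1$. The Goppa lower bound on the minimum distance of $C_{r}$ is $d \geq n - \deg(G) = qm - r$, and since $r$ is assumed to satisfy property $(*)$, Corollary $5.4$ upgrades this to the exact equality $d(C_{r}) = qm - r$, so the value $d = qm - r$ used in the conclusion is tight and well-defined. Renaming the Lemma $8.1$ parameter $l$ as $a$, one obtains a unit-memory convolutional code with parameters $(qm,\, (r+1-g) - a,\, a;\, 1,\, d_{f} \geq qm-r)_{p^{s}}$ subject to $a \leq (r+1-g)/2$, which is exactly the statement.

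There is no substantive obstacle: the argument is essentially a bookkeeping verification. The only point requiring care is matching the genus-based inequalities ($2g-2 = qm-q-m-1$) and confirming that property $(*)$ is used precisely to replace the inequality $d \geq qm - r$ by an equality, which is what makes the stated free-distance lower bound sharp for this family.
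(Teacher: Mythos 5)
Your proposal is correct and follows essentially the same route as the paper: verify $2g-2=qm-q-m-1<r<qm$ so that Lemma $8.1$ applies to the data $(D,rQ_\infty)$, read off $k=r+1-g$ and $d\geq qm-r$, and use property $(*)$ via Corollary $5.4$ to sharpen this to $d=qm-r$. The only cosmetic difference is that the paper feeds the code $C_{\Omega}(D,rQ_\infty)$ into Lemma $8.1$ (as that lemma is phrased), whereas you name $C_{\mathcal{L}}(D,rQ_\infty)$; since the lemma's output parameters depend only on $D$ and $rQ_\infty$, this does not affect the argument.
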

\begin{proof}
We consider the code $C_{\Omega}(D,r Q_{\infty})$ on $F/ \mathbb{F}_{p^s}$ with $D$ as in section $4$. Since $qm-q-m-1<r<qm$, from Lemma $8.1$ we get a unit-memory convolutional codes with parameters $(qm,r+1-g-a,a;1,d_{f} \geq d)_{p^s}$, where $g=(q-1)(m-1)/2$, $a \leq (r+1-g)/2$. As $r$ satisfies the property $(*)$, $d=qm-r$ from Corollary $5.4$.
\end{proof}

\begin{example}
Let $K=\mathbb{F}_{4}$, $q=2,~m=3,~r=4$. Then we get a unit-memory convolutional code with parameters $(6,3,1;1, \geq 2)_{4}$.\\
Similarly, we get unit-memory convolutional codes with parameters $(10,3,2;1,\geq 4)_{8}$, $(14,6,2;1, \geq 4)_{16}$, etc.
\end{example}

\section{\textbf{Locally recoverable codes from $F/ \mathbb{F}_{p^s}$}}
A code $C \subset \mathbb{F}_{q_{0}}^{n}$ is LRC with locality $r_{0}$ if for every $i \in [n]:= \{1, 2,\cdots , n \}$ there exists a subset $A_{i} \subset [n] \backslash \{i\}$, $|A_{i}| \leq r_{0}$ and a function $\phi_{i}$ such that for every codeword $x \in C$ we have $x_{i} =\phi_{i}(\{ x_{j} , j \in A_{i}\})$. An LRC code $C$ of length $n$, cardinality $q_{0}^k$ and locality $r_{0}$ is denoted by $(n,k,r_{0})$. \\The minimum distance of an $(n,k,r_{0})$ LRC code satisfies the inequality
\begin{equation} d \leq n-k- \left \lceil \frac{k}{r_{0}} \right \rceil +2. 
\end{equation}
The codes for which equality holds in equation $(9.1)$ are called optimal LRC codes. The rate of an $(n,k,r_{0})$ LRC code satisfies the inequality
\begin{equation} \frac{k}{n} \leq \frac{r_{0}}{r_{0}+1}. 
\end{equation}

In \cite{lrc}, the authors have constructed LRC codes on algebraic curves. We describe this construction briefly in this section. For more details see \cite{lrc}.\\
Let $F_{X} / \mathbb{F}_{q_{0}}$ and $F_{Y}/ \mathbb{F}_{q_{0}}$ be function fields. Let $\psi : X \rightarrow Y$ be a rational separable map of degree $r_{0} + 1$ between smooth projective absolutely irreducible curves $X$ and $Y$ corresponding to $F_{X}$ and $F_{Y}$, respectively. 
Let $\psi^{*}: F_{Y} \rightarrow F_{X}$ be the corresponding map of function fields. Since $\psi$ is separable, the primitive element theorem implies that there exists a function $y \in F_{X}$ such that $F_{X} = F_{Y}(y)$. The function $y$ can be considered as a map $y : X \rightarrow \mathbb{P}^{1}$, and we denote its degree $deg(y)$ by $h$. Let $S = \{P_{1},\cdots, P_{s_{0}}\}$ be a set of places of degree one in $F_{Y}$ and $A$ be a positive divisor of degree $l \geq 1$ whose support is disjoint from $S$. For each $j$, let $\{P_{ij}\}$ be the collection of places of $F_{X}$ over $P_{j}$. We assume that each $P_{j}$ splits completely in $F_{X}$. Let $\{ f_{1},\cdots, f_{m_{0}} \}$ be a basis of the linear space $\mathcal{L}(A)$. Let $V$ be the subspace of $F_{X}$ of dimension $r_{0}m_{0}$ generated by the functions $\{f_{j}y^{i},~ i = 0,\cdots, r_{0}-1, ~j = 1,\cdots, m_{0}\}$. Then, the code $C(A, \psi)$ is defined as the image of the map
$$e:V \rightarrow \mathbb{F}_{q_{0}}^{(r_{0}+1)s_{0}}$$
$$F \mapsto (F(P_{ij}), ~i = 0, \cdots,r_{0},~ j = 1,\cdots, s_{0}).$$

\begin{theorem}{ $($\cite{lrc},Theorem $3.1$ $)$}
The subspace $C(A, \psi)\subseteq \mathbb{F}_{q_{0}}^{(r_{0}+1)s_{0}}$ forms an $(n, k, r_{0})$ linear LRC code with the parameters
$$n = (r_{0} + 1)s_{0},$$
$$k = r_{0}m_{0} \geq r_{0}(l-g_{Y} + 1),$$
$$d \geq n-(r_{0}+1)l-(r_{0}-1)h,$$
provided that the right-hand side of the inequality for $d$ is a positive integer.
\end{theorem}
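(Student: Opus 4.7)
The length $n = (r_{0}+1) s_{0}$ is immediate from the definition, since each $P_{j} \in S$ splits completely in $F_{X}/F_{Y}$ and thus contributes $r_{0}+1$ distinct evaluation points. The lower bound $m_{0} \geq l - g_{Y} + 1$ is the Riemann--Roch inequality for the divisor $A$ on $Y$. To see $\dim V = r_{0} m_{0}$, note that $[F_{X}:F_{Y}] = \deg \psi = r_{0}+1$, so the minimal polynomial of $y$ over $F_{Y}$ has degree $r_{0}+1$, and hence $\{1, y, \ldots, y^{r_{0}-1}\}$ is $F_{Y}$-linearly independent. Any relation $\sum_{i,j} a_{ij} f_{j} y^{i} = 0$ then forces $\sum_{j} a_{ij} f_{j} = 0$ for each $i$, and since $\{f_{j}\}$ is an $\mathbb{F}_{q_{0}}$-basis of $\mathcal{L}(A)$, all $a_{ij}$ vanish. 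Once we prove the distance bound, the positivity hypothesis will give $\ker e = 0$ and hence $k = r_{0} m_{0}$.

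For the distance, the plan is to estimate the pole divisor on $X$ of a nonzero $F = \sum_{i=0}^{r_{0}-1} \sum_{j=1}^{m_{0}} a_{ij} f_{j} y^{i} \in V$. Each pulled-back $f_{j}$ has pole divisor on $X$ bounded above by $\psi^{*} A$, a divisor of degree $(r_{0}+1) l$, and $y^{i}$ has pole divisor bounded by $i (y)_{\infty}$, with $\deg (y)_{\infty} = h$. Adding contributions, $(F)_{\infty} \leq \psi^{*} A + (r_{0}-1)(y)_{\infty}$, so by $\deg (F)_{0} = \deg (F)_{\infty}$ the function $F$ has at most $(r_{0}+1) l + (r_{0}-1) h$ zeros on $X$. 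Since each zero coordinate of $e(F)$ corresponds to a distinct zero of $F$ at one of the points $P_{ij}$, the Hamming weight of $e(F)$ is at least $n - (r_{0}+1) l - (r_{0}-1) h$, which is the stated bound on $d$, valid whenever the right-hand side is positive. Positivity in turn rules out any nonzero $F \in V$ vanishing at every $P_{ij}$, so $e$ is injective on $V$ and the dimension claim follows.

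Locality is the structural heart of the construction. Fix a fiber $\{P_{0j}, \ldots, P_{r_{0}j}\}$ above $P_{j}$ and set $c_{t} := \sum_{k=1}^{m_{0}} a_{tk} f_{k}(P_{j}) \in \mathbb{F}_{q_{0}}$; this is well-defined because each $f_{k}$ is pulled back from $Y$ and is therefore constant along the fiber. Then
\[
F(P_{ij}) \;=\; \sum_{t=0}^{r_{0}-1} c_{t}\, y(P_{ij})^{t}, \qquad i = 0, 1, \ldots, r_{0},
\]
so the $r_{0}+1$ fiber values of $F$ are the evaluations at $y(P_{0j}), \ldots, y(P_{r_{0}j})$ of a single polynomial in $\mathbb{F}_{q_{0}}[T]$ of degree at most $r_{0}-1$. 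Provided these $y$-values are pairwise distinct, any $r_{0}$ of them determine the polynomial by Lagrange interpolation, whence the remaining value is recovered as an explicit $\mathbb{F}_{q_{0}}$-linear combination of the other $r_{0}$ coordinates. This exhibits the recovery functions $\phi_{i}$ and establishes locality $r_{0}$.

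The main obstacle is ensuring that the Lagrange step is actually available, i.e.\ that $y$ separates the $r_{0}+1$ places of $X$ above each $P_{j}$. This is a genericity condition on $S$, and the cleanest remedy is to restrict $S$ to lie outside the (finite) set of $P_{j} \in Y$ whose $\psi$-preimage meets the pole locus or the ramification locus of $y \colon X \to \mathbb{P}^{1}$. With such a choice of $S$ folded into the hypotheses, the remaining steps reduce to routine applications of Riemann--Roch on $Y$, pullback along $\psi^{*}$, and the identity $\deg (F)_{0} = \deg (F)_{\infty}$ on $X$.
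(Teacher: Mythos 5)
This theorem is quoted from the cited reference (Barg, Tamo and Vladut, Theorem 3.1) and the paper supplies no proof of it, so there is nothing internal to compare against; your reconstruction is the standard argument from that source (splitting of $S$ for the length, Riemann--Roch on $Y$ for $k$, a pole-divisor count bounding the zeros of $F$ by $\deg\bigl(\psi^{*}A+(r_{0}-1)(y)_{\infty}\bigr)$ for $d$, and fiberwise Lagrange interpolation in $y$ for locality) and is correct. The point you flag --- that interpolation requires $y$ to take $r_{0}+1$ distinct values on each fiber over $S$, which does not follow automatically from $F_{X}=F_{Y}(y)$ and complete splitting --- is a genuine subtlety suppressed in the statement as transcribed here, and it is resolved exactly as you propose, by choosing $S$ (equivalently, the evaluation fibers) to avoid the finitely many fibers that $y$ fails to separate.
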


Using Theorem $9.1$, we construct locally recoverable codes in the following proposition.
\begin{proposition}
Let $q, m$, $F_{X}:=F$, $F_{Y}:=\mathbb{F}_{p^s}(x)$ and $\psi:=\phi$ as in section $3$. Let $S:=\{P_{\alpha_{1}},\cdots,P_{\alpha_{m}}\}$ and $A:=l P_{\infty}$, $l \in \mathbb{N}$ and $P_{\infty}$ is the infinite place of $F_{Y}$. If $2m-lq$ is a positive integer, then there exists a linear $(n,k, r_{0})$ LRC code $C(A, \psi)$ with parameters 
$$n=qm,~k \geq (q-1)(l+1) \text{ and } d \geq 2m-lq.$$  If $q=2$, then the code is optimal with locality $r_{0}=1$.
\end{proposition}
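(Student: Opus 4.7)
The plan is to apply Theorem 9.1 directly and identify each parameter appearing in that theorem with a concrete quantity from the extension $F/\mathbb{F}_{p^s}$, then separately verify optimality in the case $q=2$ by comparing with the LRC Singleton-type bound in (9.1). First I would set up the dictionary between the two contexts: the map $\psi=\phi$ of Lemma 3.6 sends $P \mapsto [x(P):1]$ with $(x)_\infty = qQ_\infty$, so $\deg(\psi) = q$ and therefore $r_0+1 = q$, i.e.\ $r_0 = q-1$. Lemma 3.6 also shows each $P_{\alpha_i}$ splits completely in $F$ into the $q$ places $P_{\alpha_i,\beta_1},\dots,P_{\alpha_i,\beta_q}$, so $S = \{P_{\alpha_1},\dots,P_{\alpha_m}\}$ gives $s_0 = m$, the complete splitting hypothesis holds, and $\operatorname{supp}(A) = \{P_\infty\}$ is disjoint from $S$. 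The primitive element is $y$ itself, with $(y)_\infty = mQ_\infty$, so $h = \deg(y) = m$. Since $F_Y = \mathbb{F}_{p^s}(x)$ is rational, $g_Y = 0$ and $m_0 = \dim \mathcal{L}(l P_\infty) = l+1$.

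Plugging these into Theorem 9.1 gives $n = (r_0+1)s_0 = qm$; $k = r_0 m_0 = (q-1)(l+1)$ (which also agrees with the stated lower bound $r_0(l - g_Y + 1)$); and
\[
d \geq n - (r_0+1)l - (r_0-1)h = qm - ql - (q-2)m = 2m - lq,
\]
provided this quantity is positive, which is exactly the hypothesis of the proposition. This yields the claimed $(n,k,r_0)$ LRC parameters.

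For the optimality statement when $q=2$, I would specialise: $r_0 = 1$, $n = 2m$, $k = l+1$, and the lower bound becomes $d \geq 2m - 2l$. The Singleton-type bound (9.1) for $(n,k,r_0)$ LRC codes gives
\[
d \leq n - k - \left\lceil \tfrac{k}{r_0} \right\rceil + 2 = 2m - (l+1) - (l+1) + 2 = 2m - 2l.
\]
Combining the two inequalities forces $d = 2m - 2l$, so equality holds in (9.1) and the code is optimal with locality $r_0 = 1$.

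The only non-routine step is confirming that the hypotheses of Theorem 9.1 — separability of $\psi$, complete splitting of each $P_{\alpha_i}$, and the identification $F_X = F_Y(y)$ with $\deg(y) = h$ — all hold in the present setting; these are immediate from Lemma 3.2 and Lemma 3.6, since $f(x)$ is separable with simple roots in $K$ and the equation $y^q + \mu y = f(x)$ exhibits $y$ as a primitive element of degree $m$ over $\mathbb{F}_{p^s}(x)$. Once the dictionary is in place, both assertions follow without additional computation.
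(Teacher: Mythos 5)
Your proposal is correct and follows essentially the same route as the paper: identify $r_0=q-1$, $s_0=h=m$, $g_Y=0$, $m_0=l+1$ and apply Theorem 9.1, then compare $d+k+\lceil k/r_0\rceil$ with the bound (9.1) to get optimality when $q=2$. Your verification of the hypotheses of Theorem 9.1 (separability, complete splitting, $y$ as primitive element) is slightly more explicit than the paper's, but the argument is the same.
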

\begin{proof}
In terms of above notations, we have $h:=m,~s_{0}:=m,~r_{0}:=q-1,~n:=qm, ~g_{Y}=0$. So, by Theorem $9.1$, we get a linear $(n,k, r_{0})$ LRC code $C(A, \psi)$ with parameters $n=qm,~k \geq (q-1)(l+1) \text{ and } d \geq 2m-lq.$
Now
\begin{align*}
d+k+\frac{k}{r_{0}} &\geq 2m-lq+q(l+1)\\
&=n+2-(m-1)(q-2).
\end{align*}
So, when $q=2$, we get an optimal locally recoverable code with locality $r_{0}=1$.
\end{proof}

\begin{example}
Let $K=\mathbb{F}_{9}$. Let $q=3$, $m=4$, $l=2$, then we get $(12, \geq 6,2)$ locally recoverable code with minimum distance $d\geq 2$. Here, we have $d+k+\lceil \frac{k}{r_{0}} \rceil \geq 11$, while for code which meets equality in the equation $(9.1)$, we have $d+k+\lceil\frac{k}{r_{0}} \rceil=13$. This code is not optimal locally recoverable, but the gap is small.\par
Similarly, we get an LRC code with parameters $(15, \geq 6,2)$ with $d \geq 4$. In this case, also the gap in equation $(9.1)$ is small.
\end{example}

\begin{corollary}
Assume that all the hypotheses of Proposition $9.2$ holds. If $l \geq m-1$, then there exists a linear $(n,k, r_{0})$ LRC code $C(A, \psi)$ with parameters 
$$n=qm,~k \geq (q-1)(l+1) \text{ and } d \geq 2m-lq.$$ Also, equality holds in equation $(9.2)$.
\end{corollary}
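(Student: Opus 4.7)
My plan is to invoke Proposition $9.2$ directly and then observe that the extra hypothesis $l \geq m-1$ forces the rate of the resulting code to saturate the upper bound in equation $(9.2)$.

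First, I would apply Proposition $9.2$. Since all of its hypotheses are assumed to hold (in particular, $2m - lq$ is a positive integer), we obtain a linear $(n,k,r_{0})$ LRC code $C(A,\psi)$ with $n = qm$, $k \geq (q-1)(l+1)$, $d \geq 2m - lq$, and locality $r_{0} = q-1$, exactly as stated in the first half of the corollary.

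Next, I would exploit the hypothesis $l \geq m-1$, equivalently $l+1 \geq m$, in a single chain of inequalities:
\[
\frac{k}{n} \;\geq\; \frac{(q-1)(l+1)}{qm} \;\geq\; \frac{(q-1)m}{qm} \;=\; \frac{q-1}{q} \;=\; \frac{r_{0}}{r_{0}+1}.
\]
Combined with the general upper bound $k/n \leq r_{0}/(r_{0}+1)$ given by $(9.2)$, this forces $k/n = r_{0}/(r_{0}+1)$, which is exactly the equality the corollary asserts.

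There is no substantive obstacle: once Proposition $9.2$ has been invoked, the argument reduces to a two-line computation. The only subtlety worth mentioning is checking consistency of the two hypotheses: the requirement $2m - lq \geq 1$ from Proposition $9.2$ together with $l \geq m-1$ forces $m(q-2) < q$, so the corollary effectively concerns the regime $l = m-1$ with small $q$ (e.g.\ $q=2$ for any $m$, or $q=3$ with $m=2$). In that regime, the lower bound on the rate from Proposition $9.2$ automatically meets the upper bound $(9.2)$, which is the entire content of the statement.
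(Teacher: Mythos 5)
Your proof is correct and follows essentially the same route as the paper: invoke Proposition 9.2 for the parameters, then sandwich $k/n$ between the lower bound $(q-1)(l+1)/(qm) \geq (q-1)/q$ (using $l+1 \geq m$) and the upper bound $(q-1)/q$ from equation (9.2). Your additional observation that the hypotheses jointly force $m(q-2) < q$ is a correct sanity check not present in the paper, but it does not change the argument.
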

\begin{proof}
The first part of corollary follows from Proposition $9.2$. Now, from equation $(9.2)$
\begin{align*}
\frac{(q-1)}{q} \geq \frac{k}{n} \geq \frac{(q-1)(l+1)}{qm} \geq \frac{(q-1)}{q}.
\end{align*}
So, we obtain codes with highest rate.
\end{proof}

\section*{\textbf{Concluding remarks}}
In this note, we have defined one-point geometric Goppa codes from an elementary abelian $p$-extension of $\mathbb{F}_{p^{s}}(x)$ and determined their dimension and the exact minimum distance in a few cases. Also, we have listed the exact second generalized Hamming weight of these codes in a few cases. We have also given simple criteria for quasi-self-duality of one-point geometric Goppa codes and self-duality of geometric Goppa codes with a divisor $G$ (not necessarily one-point). We have obtained families of quantum codes and convolutional codes from constructed one-point geometric Goppa codes. We have also obtained locally recoverable codes from $F/\mathbb{F}_{p^s}$. It will be interesting to calculate the higher generalized Hamming weights of these codes and other classes of weak Castle codes.

\section*{\textbf{Acknowledgement}}
The authors are very grateful to the anonymous reviewer for his/her comments and suggestions which help to improve the quality of the note. The second named author is supported by Early Career Research Award
(ECR/2016/000649) by the Department of Science \& Technology (DST), Government of India.

\bibliographystyle{plain}

\end{document}